\documentclass[11pt]{amsart}

\usepackage[DIV=12,BCOR=12mm]{typearea}

\usepackage[utf8]{inputenc}
\usepackage[T1]{fontenc}
\usepackage[normalem]{ulem}
\usepackage{amssymb}
\usepackage{amsmath}
\usepackage{color}
\usepackage{arydshln}
\usepackage{tikz-cd}
\usepackage[widespace]{fourier}
\usepackage{microtype}
\usepackage{comment}
\usepackage[colorlinks,linkcolor=blue,citecolor=teal]{hyperref}
\usepackage[capitalize]{cleveref}
\usepackage{cite}
\usepackage{booktabs}
\usepackage{todonotes}
\usepackage{physics}
\usepackage{braket}
\usepackage{multirow}

\newtheorem{theorem}{Theorem}[section]

\newtheorem{corollary}[theorem]{Corollary}
\theoremstyle{definition}

\newtheorem{example}{Example}

\theoremstyle{remark}
\newtheorem{remark}[theorem]{Remark}

\newcommand{\K}{\mathbb{K}}
\newcommand{\KK}{\mathbb{K}}

\DeclareMathOperator{\SL}{SL}
\DeclareMathOperator{\GL}{GL}
\DeclareMathOperator{\PGL}{PGL}

\DeclareMathOperator{\Sp}{Sp}

\DeclareMathOperator{\diag}{diag}

\DeclareMathOperator{\stab}{stab}

\DeclareMathOperator{\Span}{span}
\newcommand{\lieg}{\mathfrak{g}}
\newcommand{\lieh}{\mathfrak{h}}
\DeclareMathOperator{\ad}{ad}

\usepackage{graphicx}
\allowdisplaybreaks[3]

\author[G. Gubbiotti]{Giorgio Gubbiotti}
\author[B. van Geemen]{Bert van Geemen}
\author[P. Vergallo]{Pierandrea Vergallo}

\address[G. Gubbiotti]{Dipartimento di Matematica ``Federigo Enriques'',  Universit\`a degli Studi di Milano, Via C. Saldini 50, 20133 Milano, Italy \& INFN Sezione di Milano, Via G. Celoria 16, 20133 Milano, Italy}
\email{giorgio.gubbiotti@unimi.it}

\address[B. van Geemen]{Dipartimento di Matematica ``Federigo Enriques'', Universit\`a degli Studi di Milano, Via C. Saldini 50, 20133 Milano, Italy}
\email{lambertus.vangeemen@unimi.it}

\address[P. Vergallo]{Dipartimento di Scienze di Base e Applicate, Università degli Studi della Basilicata, Via dell'Ateneo Lucano,  85100 Potenza, Italy \& INFN Sezione di Napoli, Via Cintia, 80126 Napoli, Italy}
\email{pierandrea.vergallo@unibas.it}

\title[]{Classification of pairs of second-order Hamiltonian operators and hydrodynamic type systems in six components}

\date{\today}

\numberwithin{equation}{section}

\begin{document}
\begin{abstract}
Following a recent characterisation of hydrodynamic systems with second-order Hamiltonian structure in terms of alternating three-forms on the projective space, we present a complete classification of systems in 6 components. We use non-trivial group action techniques to obtain 5 canonical forms. A general result for every dimension is also obtained for one of these cases, whose geometric structure is strictly related to the symplectic canonical form.  
\end{abstract}

\maketitle

\section{Introduction}

The Hamiltonian formalism of Classical Mechanics has been a cornerstone of
investigating both chaotic and regular dynamical systems since its
introduction by Sir William Rowan Hamilton in the 19th century. This
framework has not only had a profound impact on our understanding of
Classical Mechanics but also gave birth to several purely theoretical
mathematical disciplines, such as Symplectic Geometry and
Topology~\cite{CannasSilva2001,McDuffSalamonBook}. Furthermore, the
Hamiltonian formalism is fundamental to modern physics, serving as a
foundation for quantum mechanics~\cite{Sakurai1967}. For all these reasons,
Hamiltonian formalism is now contained on any basic textbook on Classical
Mechanics~\cite{LandauMech,Goldstein2002}. As noted by Arnold~\cite[Part
III]{Arnold1997}, the success of Hamilton's approach to Mechanics stems
from the fact that such an approach can be formulated in a purely geometric
form, making it more general than Lagrange's method and paving the way for
numerous extensions.

In particular, in the second half of the 20th century through the work of
Magri~\cite{Magri1976} has begun the systematic study of the Hamiltonian
structures of evolutionary PDEs. For a complete review on this topic we
refer to the review~\cite{mokhov}, here we will limit ourselves
to report the basic notion from this beautiful extension of Hamiltonian
Mechanics. So, to be more specific, a systems of evolutionary PDEs has the
following form:
\begin{equation}
    u_{t}^{i} = f^{i}(\vb{u},\vb{u}_{x},\dots,\vb{u}_{kx};x,t),
    \quad
    \vb{u} = 
    \begin{pmatrix}
        u^{1}
        \\
        \vdots
        \\
        u^{n}
    \end{pmatrix},
    \quad \vb{u}_{jx}=\frac{\partial^{j} \vb{u}}{\partial x^{j}},
    \label{eq:evol}
\end{equation}
where usually the variable $t$ has the physical meaning of time (the evolution
variable) and $x$ has the physical meaning of space (the spatial variable).
Given a (pseudo)dif\-fer\-en\-tial operator $\mathcal{P}$ we can associate to
it a \emph{bracket} defined on functionals, i.e.\ expressions of the form:
\begin{equation}
    F=\int{f(\textbf{u},\dots
    \textbf{u}_{kx};x,t)\, \dd x},
    \qquad 
    G=\int{g(\textbf{u},\dots \textbf{u}_{rx},x,t)\, \dd x},
    \label{eq:functionals}
\end{equation}
through the following formula
\begin{equation}
    \{F,G\}_\mathcal{P}=
    \int{\frac{\delta F}{\delta u^i}\mathcal{P}^{ij}\left(\frac{\delta G}{\delta
    u^j}\right)\, \dd x}
    \label{eq:bracketgen}
\end{equation}
where we make use of Einstein summation convention, and $\delta/\delta u^i$ is
the variational derivative with respect to the $i$-th field variable. The
operator $\mathcal{P}$ is called \emph{Hamiltonian} if the
bracket~\eqref{eq:bracketgen} is a Poisson bracket, that is, it is
skew-symmetric:
\begin{equation}
    \pb{G}{F}_{\mathcal{P}} 
    =
    -\pb{F}{G}_{\mathcal{P}},
    \quad \forall F,G,
    \label{eq:skew}
\end{equation}
and satisfy the Jacobi identity:
\begin{equation}
    \{\{F,G\}_{\mathcal{P}},H\}_{\mathcal{P}} 
    +\{\{H,F\}_{\mathcal{P}},G\}_{\mathcal{P}}
    +\{\{G,H\}_{\mathcal{P}},F\}_{\mathcal{P}}=0, 
    \quad \forall F,G,H.
\end{equation}
This is the analogue for PDEs of the Poisson bracket between functions
(observables in physical terminology) on the phase space.  Then, fixing a
functional $H=\int{h((\textbf{u},\dots \textbf{u}_{kx};x,t)\dd x}$, i.e.\
the \emph{Hamiltonian}, we have that the equations of motion are written
in the form:
\begin{equation}
    u^i_t = \{u^i, H\}_{\mathcal{P}}.
    \label{eq:hameq}
\end{equation}
This is the PDE equivalent of Hamilton's equations of motion, and if the
right hand side of~\eqref{eq:hameq} agrees with the right hand side
of~\eqref{eq:evol}, then we say that the PDE system is Hamiltonian.

In the sense of the previous definitions Hamiltonian operators
$\mathcal{P}$ can be (pseudo)dif\-fer\-en\-tial operators as complicated as
one wants, and in particular they can even be \emph{nonlocal}. However,
there is a relevant class of \emph{local} differential operators which can
be treated in great extent: the one of \emph{homogeneous operators}.
Indeed, one can introduce a natural grading on the space of differential
operators through the basic rules:
\begin{equation}
    \deg(u_{kx})=k,
    \quad \deg(\partial_x^\ell)=\ell.
    \label{eq:grading}
\end{equation}
Then, given a local differential operator one says that it is homogeneous
of order $m$ if all its constituent monomials have degree $m$, i.e.\ it is
of the following form:
\begin{align}\label{ops_gen}
   \begin{split}
\mathcal{P}^{ij}&=a^{ij}\!(\textbf u) \,\partial^m_x+b^{ij}_k\!(\textbf u)\,u^k_x\,\partial^{m-1}_x+\big(c^{ij}_k\!(\textbf u)\,u^k_{xx}+c^{ij}_{k\ell}(\textbf u)\,u^k_x\,u^\ell_x\big)\partial^{m-2}_x\\
&~~~
+\cdots \,+\big( r^{ij}_k\!(\textbf u)\,u^k_{mx} + \dots +r^{ij}_{k_1 \dots \,k_m}\!(\textbf u)\,u^{k_1}_x\,\cdots\, u^{k_m}_x\big) \,,
\end{split}
\end{align}  
The relevance of homogeneous operators in the theory of the Hamiltonian
structures for PDEs is due to the groundbreaking discovery of Dubrovin and
Novikov~\cite{DubNov83}, which proved that a first order homogeneous operator:
\begin{equation}
    \mathcal{A}^{ij}=g^{ij}\!(\textbf u)\,\partial_x+b^{ij}_k\!(\textbf u)\, u^k_x, 
\end{equation}
under the non-degeneracy assumption $\det g\neq 0$ is Hamiltonian if and
only if $g_{ij}=(g^{lk})^{-1}$ is a flat metric and
$\Gamma^i_{jk}=-g_{js}b^{si}_k$ are the symbols of the Levi-Civita
connection of $g$. 

Dubrovin and Novikov's result has been extended to include a broader range
of operator classes, including higher-order operators that were examined in
greater depth by Doyle in~\cite{doyle} and Pot\"emin in~\cite{potemin}.
More explicitly, they found criteria for operators of second and third order to be
classified as Hamiltonian and demonstrated the existence of a
transformation of dependent variables allowing a generic operator $\mathcal{P}$ of order $m=2,3$ to assume a
specific, simplified form called the \emph{Doyle-Pot\"{e}min form}, namely:
\begin{equation}
    \label{eq:dpgen}   
    \mathcal{P}^{ij}=\partial_x\circ \mathcal{Q}^{ij} \circ \partial_x,
\end{equation} 
where $\mathcal{Q}^{ij}$ is a homogeneous operator of order $d=m-2$. Recent developments in this direction show how this canonical
form is typical of a large number of Hamiltonian
operators~\cite{LorShaVit1} where the homogeneous operator $\mathcal{Q}$ is
of arbitrary order $d\geq0$.

Over the past few years, researchers have employed a geometric perspective
to investigate higher-order homogeneous Hamiltonian operators. This
approach is based on techniques from both differential and algebraic geometry,
with notable contributions made in references~\cite{VerVit2,FPV14,FPV16,FPV17:_system_cl,LorShaVit1,GvGV1}. Specifically,
for the non-degenerate case ($\det g \neq 0$), it has been proved
that the leading coefficient $g^{ij}$ remains unchanged under projective
transformations of the field variables and that the entire operator is
invariant under a specific related transformation involving the independent
variables $(x,t)$, called \emph{reciprocal transformations}.



Let us now go back from operators to the systems of evolutionary PDEs.  A
particularly relevant class of evolutionary systems is the one of systems of
hydrodynamic type. Those systems of PDEs are quasilinear systems of first order
evolutionary PDEs for which there exists an affinor $V_{j}^{i}$, i.e.\ a
$(1,1)$-tensor, such that:
\begin{equation}
    u^i_t=V^i_j(\textbf{u})u^j_x.
    \label{eqsys}
\end{equation}
The affinor $V^i_j$ is also known in the literature as the \emph{velocity
matrix} for the system~\eqref{eqsys}. 
If the velocity matrix $V_{j}^{i}$ can be written as the differential of a
vector field $V^{i}$, i.e.:
\begin{equation}\label{conslaws}
    V^i_j(\vb{u})=\frac{\partial V^i}{\partial u^j},
\end{equation}
then the the hydrodynamic type system is called \emph{a system of conservation
laws}. The functions $V^{i}$ are called the \emph{fluxes}.


The reasons for studying hydrodynamic type systems are various. First of all,
they are one of the simplest cases of systems of (possibly) nonlinear PDEs that
can be constructed. Second, they are related to Tsar\"ev's theory of Riemann
invariants and the generalised hodograph transformation, as discussed
in~\cite{tsa2}. Finally, their relationship with Hamiltonian structures is
well known~\cite{tsa1}, see also~\cite{sevennec}, and the equations of
motion~\eqref{eq:hameq} are particularly easy to express:
\begin{equation}
    u^i_t = V^i_j(\textbf{u})u^j_x
    = \mathcal{P}^{ij}\left(\frac{\delta H}{\delta u^j}\right).
\end{equation}
In the literature, the Hamiltonian structures of hydrodynamic type systems
are very well studied, see for instance~\cite{FPV14,VerVit2,GvGV1}.
In particular, we mention that in the case of second- and third-order
operators, i.e., references~\cite{FPV14} and~\cite{VerVit2}, respectively,
the invariance properties of the leading order term $g^{ij}$ allow for the
possibility of a characterisation of the associated systems of equations of
hydrodynamic type. Moreover, based on the results of~\cite{VerVit2}, in our
previous work~\cite{GvGV1} we proved that a second-order homogeneous
Hamiltonian operator and its associated system of hydrodynamic type in $n$
components can be seen as the coefficients of a single object: a totally
skew-symmetric three-form, and we used this knowledge to classify the pairs
of operator-system up to $n=4$.

\

In this paper, we present a classification of hydrodynamic-type systems
admitting a second-order homogeneous Hamiltonian structure in six components. The tools employed here are
primarily drawn from algebraic geometry and the theory of group actions.
This approach is consistent and extends the results in the literature,
in particular it extends our previous result~\cite{GvGV1}. To be more precise
the plan of the paper is the following: in Section~\ref{sec:back} we state
the needed known results on second order homogeneous Hamiltonian operators
and their associated hydrodynamic type systems. In Section~\ref{sec:class}
we present the full classification of the pairs operator-systems in six
components. Finally, in Section~\ref{sec:concl} we give some conclusions
and outlook on possible future results.

\section{Background material}
\label{sec:back}

Following the previous works~\cite{VerVit1,VerVit2,GvGV1}, in this section we recall the needed results on the projective geometric properties of
second-order homogeneous Hamiltonian operators and their associated systems of
conservation laws.
In particular, we recall the classification procedure of the pairs of
operators-systems as devised in~\cite{GvGV1}.

\subsection{Characterisations of pairs}

The general structure of a second-order homogeneous Hamiltonian operator is:
\begin{equation}
    \label{1}
    \mathcal{P}^{ij}=g^{ij}\partial_x^2+b^{ij}_ku^k_x\partial_x+c^{ij}_ku^{k}_{xx}
    +c^{ij}_{kh}u^k_xu^h_x,
\end{equation}
where the coefficients $g^{ij},b^{ij}_k,c^{ij}_k$ and $c^{ij}_{kl}$ depend only
on the field variables. We say that the operator is non-degenerate if the
co-metric matrix $g=(g^{ij})_{i,j=1}^{n}$ has $\det g\neq 0$. Throughout the
rest of the paper we will only deal with non-degenerate operators.

As recalled in the Introduction the second-order second-order homogeneous
Hamiltonian operators can be put through some local change of coordinates into
the Doyle--Pot\"emin form~\eqref{eq:dpgen}, which in this particular case
assumes the following structure:
\begin{equation}
    \mathcal{P}^{ij}=\partial_x\circ g^{ij}\circ \partial_x, 
\end{equation}
where we take $\mathcal{Q}^{ij}=g^{ij}$ in the canonical form.
Moreover, in this coordinates the inverse of the leading coefficient reads as
\begin{equation}\label{3}
    g_{ij}=T_{ijk}u^k+g^0_{ij},
\end{equation}
where $T=T_{ijk}\dd u^i\wedge \dd u^j\wedge \dd u^k$, $g^0=g^0_{ij}\dd
u^i\wedge \dd u^j$ are respectively an alternating three-form and an
alternating two-form with constant coefficients, i.e.\ $T\in
\Lambda^{3}\KK^{n}$ and $g^{0}\in\Lambda^{2}\KK^{n}$, with
$\mathbb{K}=\mathbb{R},\mathbb{C}$.

Let us consider the $n$-dimensional projective space
$\mathbb{P}^n=\mathbb{P}(\mathbb{K}^{n+1})$, and let $u^1, \ldots ,u^n,u^{n+1}$
be the coordinates on $\K^{n+1}$. Following~\cite{VerVit2}, we can define a
homogeneous version $G$ of $g$ in these coordinates, such that
equation~\eqref{3} becomes
\begin{equation}
    G_{ij}=T_{ijk}u^k+g^0_{ij}u^{n+1}.
    \label{eq:Gdef}
\end{equation}
Let us observe that $G$ as defined in~\eqref{eq:Gdef} is \emph{not} an
alternating two-form. However, we can associate to $G_{ij}$ an alternating
three-form $\tilde{T}\in\Lambda^{3}\KK^{n+1}$ defined via its components as as
follows:
\begin{equation}
    \label{5}
    \tilde{T}_{ijk}=
    \begin{cases}T_{ijk} \qquad i,j,k\neq n+1,\\
    +g^0_{ij}\qquad k=  n+1, \\
    -g^0_{ik}\qquad j=n+1,\\
    +g^0_{jk}\qquad i=n+1.
    \end{cases}
\end{equation}
The construction we just explained yields the following result:
 
\begin{theorem}[\cite{VerVit2}]\label{th:corresp}
  There is a bijective correspondence between the leading coefficients of
  second order homogeneous Hamiltonian operators in Doyle-Pot\"{e}min form and
  the three-forms $\tilde{T}$. Moreover, the bijective correspondence is
  preserved by projective reciprocal transformations up to a conformal factor.
\end{theorem}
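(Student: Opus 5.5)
The plan is to prove the two assertions of Theorem~\ref{th:corresp} separately: first the bijection, then its behaviour under projective reciprocal transformations.

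\emph{Bijection.} We take as known the Doyle--Pot\"emin characterisation recalled above. A non-degenerate operator $\partial_x\circ g^{ij}\circ\partial_x$ is Hamiltonian if and only if the inverse $g_{ij}$ of its leading coefficient is an alternating matrix depending affinely on $\vb{u}$, i.e.\ of the form~\eqref{3}, $g_{ij}=T_{ijk}u^k+g^0_{ij}$, with $T\in\Lambda^3\KK^n$ and $g^0\in\Lambda^2\KK^n$ (and $\det g\neq0$ generically).

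The leading coefficient therefore determines the pair $(T,g^0)$ uniquely. This uses the fact that an affine function determines its linear part and its constant part, and that $T_{ijk}$ is recovered as $\partial g_{ij}/\partial u^k$. Conversely, every such pair gives an admissible $g_{ij}$.

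It then remains to check that~\eqref{5} is a linear isomorphism $\Lambda^3\KK^n\oplus\Lambda^2\KK^n\to\Lambda^3\KK^{n+1}$. Writing $e_{n+1}$ for the extra basis vector, \eqref{5} reads
\[
\tilde T = T + g^0\wedge e_{n+1}^*,
\]
which is the standard splitting $\Lambda^3(V\oplus\KK)=\Lambda^3V\oplus\Lambda^2V$. One verifies directly that the sign pattern in~\eqref{5} is exactly the total antisymmetrisation of $g^0_{ij}$ placed in slots $(i,j,n+1)$. The inverse map restricts $\tilde T$ to indices $\le n$ to recover $T$, and contracts with $e_{n+1}$ to recover $g^0$. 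Equivalently, $G_{ij}=\tilde T_{ijk}u^k$ with $k$ running to $n+1$, so $G$ is the contraction $\iota_{\vb{u}}\tilde T$ of the homogeneous vector $(\vb u,u^{n+1})$.

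\emph{Invariance.} By~\cite{FPV14,VerVit2}, a projective reciprocal transformation acts on the field variables by $\tilde u^i=(A^i_ju^j+A^i_{n+1})/(A^{n+1}_ju^j+A^{n+1}_{n+1})$, with $A\in\GL(n+1)$, together with a change $d\tilde x=\Delta\,dx$ where $\Delta=A^{n+1}_ju^j+A^{n+1}_{n+1}$. Under these, $g_{ij}$ transforms as a $(0,2)$-tensor, multiplied by a power of $\Delta$ coming from the rescaling of $\partial_x$.

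The plan is to compute this transformation law explicitly and compare it with $\iota_{\vb{u}}\tilde T$. Lifting to homogeneous coordinates, the Jacobian of the projective map factors as $\Delta^{-1}$ times $A$ modulo terms proportional to $\tilde{\vb u}$. The key observation is that these correction terms are annihilated by the contraction, because $\tilde T(\vb u,\vb u,\cdot)=0$ by skew-symmetry. Hence the transformed $G$ equals $\Delta^{k}\,\iota_{\vb u}(A^*\tilde T)$ for a suitable integer $k$. This shows that the new operator corresponds to $A^{-1}\cdot\tilde T$ up to the conformal factor $\Delta^{k}$.

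\emph{Main obstacle.} The hard step is the bookkeeping in this invariance computation. It has two parts: tracking exactly which power of $\Delta$ arises from transforming $\partial_x\circ g\circ\partial_x$ under $x\mapsto\tilde x$, and verifying that the non-tensorial terms of the projective Jacobian drop out by skew-symmetry. I would first do this infinitesimally, for the generators of $\mathfrak{gl}(n+1)$: translations, linear maps, and the special projective maps $u^i\mapsto u^i/(1+a_ju^j)$. For the first two classes the claim is immediate; for the third a short computation should suffice. The finite statement then follows by composing generators.
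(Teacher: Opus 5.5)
Your bijection argument is the paper's own construction: \eqref{3}, the splitting $\Lambda^3\K^{n+1}=\Lambda^3\K^{n}\oplus\Lambda^2\K^{n}$ encoded by \eqref{5}, and $G=\iota_{(\vb{u},u^{n+1})}\tilde T$ as in \eqref{eq:Gdef}. Strictly, it is a bijection onto those $\tilde T$ with $\det g\not\equiv 0$, as you note. For the second assertion the paper gives no argument and only cites \cite{VerVit2}. Your skew-symmetry observation is the right algebraic mechanism. Take $U=(\vb{u},1)$, $J=\partial\tilde{\vb{u}}/\partial\vb{u}$ and $\tilde T'=(A^{-1})^*\tilde T$. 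Then $(JX,0)=\Delta^{-1}A\bigl((X,0)-c_X U\bigr)$ and $(\tilde{\vb{u}},1)=\Delta^{-1}AU$, hence
\begin{equation*}
  \tilde T'\bigl((JX,0),(JY,0),(\tilde{\vb{u}},1)\bigr)=\Delta^{-3}\,\tilde T\bigl((X,0),(Y,0),U\bigr).
\end{equation*}

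The gap is in the step you call bookkeeping. The exponent is not a free ``suitable integer''. For a non-affine projective map, $\Delta$ is the reciprocal of a non-constant affine function of $\tilde{\vb{u}}$. So $\Delta^{k}\,\iota_{(\tilde{\vb{u}},1)}\tilde T'$ with $k\neq 0$ is in general not of the form \eqref{3}, and the transformed operator would then correspond to no three-form at all. The theorem needs the transformed leading coefficient to be exactly $\tilde g_{ij}=\Delta^{-3}\frac{\partial u^p}{\partial\tilde u^i}\frac{\partial u^q}{\partial\tilde u^j}g_{pq}$, so that the display gives $\tilde g=\iota_{(\tilde{\vb{u}},1)}\tilde T'$ exactly. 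The conformal factor in the statement is this $\Delta^{-3}$ relative to the tensorial transform. It is not a leftover factor between $\tilde g$ and $\tilde T'$.

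Your proposed way of finding the exponent is a point change of fields followed by substituting $\partial_x=\Delta\,\partial_{\tilde x}$ into $\partial_x\circ g\circ\partial_x$. That is not how a Hamiltonian operator transforms under a field-dependent change of $x$. It ignores the change of measure in the functionals (equivalently $\delta(x-y)=\Delta\,\delta(\tilde x-\tilde y)$). It therefore gives $\Delta^{-2}$ instead of $\Delta^{-3}$, hence a non-affine $\tilde g$. It also cannot show that the transformed operator is local, which fails for general reciprocal transformations. That locality and form-preservation is the real content of the invariance, and you only import it from \cite{FPV14,VerVit2}.

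A self-contained argument uses the potentials of \eqref{finop}. Set $B=(b,x)\in\K^{n+1}$; the projective reciprocal transformation is then the linear map $B\mapsto AB$. The symplectic form inverse to the ultralocal operator $-g^{ij}(b_x)$ is, up to normalisation, $\int g_{ij}\,\delta b^i\wedge\delta b^j\,dx=\int\tilde T(\delta B,\delta B,B_x)\,dx$ with $\delta B=(\delta b,0)$. This form is reparametrisation-invariant and covariant under $A$. Variations at fixed $x$ and at fixed $\tilde x$ differ by multiples of the tangent vector, which your skew-symmetry argument kills. Meanwhile $\delta\tilde b=\Delta J\,\delta b$ and $d\tilde x=\Delta\,dx$ produce exactly the factor $\Delta^{-3}$. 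Since $\int h\,dx=\int h\,\Delta^{-1}d\tilde x$ remains local, the Hamiltonian flows correspond as well.
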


That is, the above construction completely characterises the  second-order
homogeneous Hamiltonian operators in terms of three-forms, and up
to a ``large'' group of transformations. 

Moreover, in~\cite{VerVit1} the authors gave a set of necessary conditions for
a system of hydrodynamic type~\eqref{eqsys} to admit a Hamiltonian formulation
with a homogeneous second-order operator.  This is easily explained introducing
potential coordinates, i.e.\ $b^i_x=u^i$. Indeed, in such coordinates the
operator~\eqref{1} and the system given by~\eqref{conslaws} take the following
simple forms respectively:
\begin{equation}
    \label{finop}
    \mathcal{P}^{ij}=-g^{ij}, \qquad b^i_t=V^i(b_x).
\end{equation}
Then, for a system of conservation laws, the compatibility conditions to be
Hamiltonian with a second order Hamiltonian operator~\eqref{finop} are
expressed by the following theorem:

\begin{theorem}[\cite{VerVit1, VerVit2}] 
    The necessary conditions for a second-order homogeneous Hamiltonian
    operator $\mathcal{P}$~\eqref{finop} to be a Hamiltonian operator for a
    quasilinear system of first-order conservation laws are
    \begin{subequations}\label{eq:37}
      \begin{gather}
        \label{eq:451}
        g_{qj}V^j_{,p} + g_{pj}V^j_{,q} = 0,
        \\
        \label{eq:38}
        g_{qk}V^k_{,pl} + g_{pq,k}V^k_{,l} + g_{qk,l}V^{k}_{,p}= 0.
      \end{gather}
    \end{subequations}
\end{theorem}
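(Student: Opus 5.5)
The plan is to write out, in potential coordinates, the condition that the Hamiltonian flow generated by $\mathcal{P}^{ij}=-g^{ij}$ coincides with the system $b^i_t=V^i(b_x)$. Then I expand that identity in the jet variables and read off the coefficients of the independent monomials.

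Since $\mathcal{P}=-g^{ij}(b_x)$ is an operator of order zero in the $\partial_x$ sense, but with coefficients depending on $b_x$, the Hamiltonian equations become
\[
b^i_t=-g^{ij}(b_x)\,\frac{\delta H}{\delta b^j}.
\]
The cleanest way to handle this is to multiply by $g_{ij}$. Hamiltonianity of the system then means that
\[
E_i:=-g_{ij}(b_x)\,V^j(b_x)
\]
is a variational derivative $\delta H/\delta b^i$ of some functional $H$. By the Helmholtz criterion, this holds locally if and only if the Fréchet derivative of $E=(E_i)$ is formally self-adjoint. Since $E_i$ depends on $b_x$ only, we have
\[
\ell_E=A_{ip}\,\partial_x+(\partial_x A_{ip}),\qquad A_{ip}=\frac{\partial E_i}{\partial b^p_x}=-(g_{ij,p}V^j+g_{ij}V^j_{,p}),
\]
with commas denoting derivatives with respect to $u^p=b^p_x$. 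The adjoint is $\ell_E^*=-\partial_x\circ A_{pi}$.

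Equating $\ell_E=\ell_E^*$ and comparing the two orders gives two conditions. The first-order part gives $A_{ip}+A_{pi}=0$, i.e.
\[
g_{ij,p}V^j+g_{pj,i}V^j+g_{ij}V^j_{,p}+g_{pj}V^j_{,i}=0.
\]
The zeroth-order part is then automatically $\partial_x A_{ip}=-\partial_x A_{pi}$, so it adds nothing. Now use the structure from \eqref{3}: $g_{ij,p}=T_{ijp}$ is totally skew-symmetric. Hence $g_{ij,p}+g_{pj,i}=T_{ijp}+T_{pji}=0$, and the condition reduces exactly to \eqref{eq:451}.

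To get \eqref{eq:38}, I would differentiate \eqref{eq:451} with respect to $u^l$:
\[
g_{qj,l}V^j_{,p}+g_{qj}V^j_{,pl}+g_{pj,l}V^j_{,q}+g_{pj}V^j_{,ql}=0.
\]
I would then exploit the symmetry of $V^j_{,pl}$ in $(p,l)$ together with the total skew-symmetry of $g_{ij,k}$. The step is to take the combination "(equation for $(p,q;l)$) $-$ (equation with $q\leftrightarrow l$) $+$ (equation with $p\leftrightarrow l$)", the standard trick for solving a Christoffel-type system. This should isolate $g_{qk}V^k_{,pl}$ and leave exactly the terms $g_{pq,k}V^k_{,l}+g_{qk,l}V^k_{,p}$, using $g_{ab,c}=T_{abc}$ to rewrite the permuted index positions. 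I expect the main obstacle to be this index bookkeeping: one has to check that the remaining $T$-terms collapse to the stated two terms rather than to a different but equivalent combination. If the direct three-term combination does not close, the fallback is to instead impose the Jacobi/compatibility condition of the pair $(\mathcal{P},V)$ directly, since the statement asserts only necessity.
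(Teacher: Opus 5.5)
Your argument reaches the right conclusion, and the three-term combination you were unsure about does close. Write $F(p,q;l)$ for the $u^l$-derivative of \eqref{eq:451}. Then
\[
F(p,q;l)-F(p,l;q)+F(l,q;p)=2\bigl(g_{qk}V^k_{,pl}+g_{pq,k}V^k_{,l}+g_{qk,l}V^k_{,p}\bigr).
\]
The $V^k_{,q}$-terms cancel because $T_{pkl}+T_{lkp}=0$. The $V^k_{,p}$- and $V^k_{,l}$-terms double by total skew-symmetry of $g_{ij,k}=T_{ijk}$, which also gives $T_{qkp}=T_{pqk}$ and hence the index placement $g_{pq,k}$. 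So no fallback is needed.

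There is, however, a slip in the Helmholtz step. Because $E$ depends only on $b_x$, you have $\ell_E(\xi)_i=A_{ip}\,\partial_x\xi^p$, i.e.\ $\ell_E=A\circ\partial_x$ with no zeroth-order term. What you wrote, $A\partial_x+(\partial_xA)$, is $\partial_x\circ A$. With the adjoint you (correctly) took, $\ell_E^*=-\partial_x\circ A^T=-A^T\partial_x-(\partial_xA^T)$, the zeroth-order comparison reads $(\partial_xA)=0$. This means $\partial_l\partial_p(g_{qk}V^k)=0$ for all $l$, which is not vacuous. Expanded with $g_{qk,pl}=0$ and $g_{qk,p}=g_{pq,k}$, it is exactly \eqref{eq:38}.

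So your claim that the zeroth-order part ``adds nothing'' is false. This does not break necessity, since you recover \eqref{eq:38} by differentiation anyway. But the corrected Helmholtz condition, ``$\partial_p(g_{qk}V^k)$ is skew and constant'', gives both conditions in one stroke, and even the integrated form \eqref{cond0}.

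The paper gives no proof of its own; it quotes \cite{VerVit1,VerVit2}, where the conditions come from the theory of coverings. There one imposes $\ell_F\circ\mathcal{P}=\mathcal{P}^*\circ\ell_F^*$ on solutions of $F=b_t-V(b_x)=0$. The $\partial_x$-coefficient gives \eqref{eq:451}, and the coefficient of $b^l_{xx}$, after lowering indices with $g_{ij}$ and using \eqref{eq:451}, gives \eqref{eq:38}. That route does not presuppose a Hamiltonian functional, and it uses only that $g$ is skew and non-degenerate.

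Your route assumes the flow is generated by some functional, which matches the paper's definition in the introduction. It also needs the structure \eqref{3} (total skew-symmetry of $g_{ij,k}$) already to reach \eqref{eq:451}. In exchange it is elementary, and it shows transparently that under \eqref{3} the system \eqref{eq:37} is the Killing-type equation $\partial_pW_q+\partial_qW_p=0$ for $W_q=g_{qk}V^k$. In that form, \eqref{eq:38}, i.e.\ $\partial_l\partial_pW_q=0$, is a differential consequence of \eqref{eq:451}.
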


Remarkably enough, in this particular case the system~\eqref{eq:37} can be
explicitly solved and we obtain that for a fixed non degenerate $g_{ij}$ the
fluxes are of the following form
\begin{equation}\label{cond0}
    V^i=g^{is}\left(A_{sl}u^l+B_s\right),
\end{equation}
where $A=A_{ij} \dd u^i\wedge \dd u^j$ is an alternating two-form and $B=B_i \dd u^i$ is a
one-form with constant coefficients, i.e.\ $A\in \Lambda^{2}\KK^{n}$ and
$B\in\Lambda^{1}\KK^{n}\cong\KK^{n}$.

If $g$ is the leading coefficient of a second-order Hamiltonian operator for a
system given by the fluxes $V^i$, we say that $(g,V)$ forms a \emph{compatible
pair}.  We denote the space of the pairs operator-system in $n$ components by
$\mathcal{Y}_{n}$.

Let us now consider~\eqref{cond0}, written in the following equivalent form:
\begin{equation}
    g_{is}V^s=A_{il}u^l+B_i.
\end{equation}
In projective coordinates for $g$, this latter equality can be rewritten as:
\begin{equation}\label{pluck}
    T_{ijs}u^jV^s+g^0_{is}V^s=A_{is}u^s+B_i, \qquad i=1,2,\dots , n.
\end{equation}
In our previous paper~\cite{GvGV1}, we proved that there exists a three-form
$\omega\in\Lambda^3\KK^{n+2}$ such that equation~\eqref{pluck} is the
annihilation set of lines for the form, i.e. those lines whose pullback with respect to the form vanishes,
and this is in bijection with the form
itself. That is, the following Theorem holds true:

\begin{theorem}[\cite{GvGV1}]\label{thm11}
    There exists a correspondence between the pair $(\mathcal{P},V)$ of the
    second-order operator and the associated systems in $n$ components and
    alternating three-forms in $n+2$ components.  Explicitly, there exists a
    bijective map $\Phi\colon \Lambda^3\mathbb{K}^{n+2}\rightarrow
    \mathcal{Y}_{n}$ defined as:
    \begin{align}
       \Lambda^3\mathbb{K}^{n+2}\ni ( \omega_{ijk}) 
       \mapsto \left(\omega_{ijk}u^k+\omega_{ij\,n+1}, g^{is}(\omega_{ij\, n+2}u^j+\omega_{i\, n+1\, n+2})\, \right)
       \in \mathcal{Y}_{n},
    \end{align}
    with inverse 
    $\Phi^{-1}\colon\mathcal{Y}_{n} \rightarrow \Lambda^3\mathbb{K}^{n+2}$ defined
    as:
    \begin{align}
        \mathcal{Y}_{n}\ni(\mathcal{P},V)
        \mapsto
        \Omega=
        \tilde{T}_{ijk} + A\wedge \dd u^{n+2}
        +B \wedge \dd u^{n+1}\wedge \dd u^{n+2}\in\Lambda^{3}\K^{n+2},
    \end{align}
    where $\tilde{T}\in\Lambda^{3}\K^{n+1}$ is defined in equation~\eqref{5},
    and $A\in\Lambda^{2}\K^{n}$, $B\in\Lambda^{1}\K^{n}\cong\K^{n}$ are the
    constants appearing in equation~\eqref{pluck}.
\end{theorem}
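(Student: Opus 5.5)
The plan is to reduce the statement to three facts already available: \cref{th:corresp}, the explicit solution \eqref{cond0} of the compatibility conditions \eqref{eq:37}, and a dimension count. The first step is to parametrise $\mathcal{Y}_{n}$ as a set. By \cref{th:corresp}, the leading coefficient $g$ of a non-degenerate operator in Doyle--Pot\"emin form is the same datum as the pair $(T,g^{0})\in\Lambda^{3}\K^{n}\oplus\Lambda^{2}\K^{n}$, i.e.\ as $\tilde T\in\Lambda^{3}\K^{n+1}$ via \eqref{5}. By \eqref{cond0}, once $g$ is fixed, the compatible fluxes are exactly those of the form $V^{i}=g^{is}(A_{sl}u^{l}+B_{s})$ with $(A,B)\in\Lambda^{2}\K^{n}\oplus\K^{n}$. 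The map $(A,B)\mapsto V$ is injective because $g$ is invertible on the open set where $\det g\neq0$. Hence $\mathcal{Y}_{n}$ is in bijection with $\Lambda^{3}\K^{n}\oplus\Lambda^{2}\K^{n}\oplus\Lambda^{2}\K^{n}\oplus\Lambda^{1}\K^{n}$.

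The second step is the multilinear algebra identity. Take a basis $e_{1},\dots,e_{n+2}$ and write $\Lambda^{3}\K^{n+2}=\Lambda^{3}\K^{n+1}\oplus\Lambda^{2}\K^{n+1}\wedge e_{n+2}$. Then split $\Lambda^{2}\K^{n+1}=\Lambda^{2}\K^{n}\oplus\K^{n}\wedge e_{n+1}$ and $\Lambda^{3}\K^{n+1}=\Lambda^{3}\K^{n}\oplus\Lambda^{2}\K^{n}\wedge e_{n+1}$. Together these give
\[
\Lambda^{3}\K^{n+2}\cong\Lambda^{3}\K^{n}\oplus\Lambda^{2}\K^{n}\oplus\Lambda^{2}\K^{n}\oplus\K^{n}.
\]
The components of $\omega$ are read off as $T_{ijk}=\omega_{ijk}$, $g^{0}_{ij}=\omega_{ij\,n+1}$, $A_{ij}=\omega_{ij\,n+2}$ and $B_{i}=\omega_{i\,n+1\,n+2}$, with all indices $\le n$. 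With these identifications, $\Phi$ is exactly the formula in the statement: first component $g_{ij}=\omega_{ijk}u^{k}+\omega_{ij\,n+1}$, second component $V^{i}$ as in \eqref{cond0}. The candidate inverse sends $(g,V)$ to $\tilde T+A\wedge\dd u^{n+2}+B\wedge\dd u^{n+1}\wedge\dd u^{n+2}$. I then check both compositions on components. Applying $\Phi$ to $\Omega$ returns $(T,g^{0},A,B)$, hence $(g,V)$. Conversely, the sign convention in \eqref{5} guarantees that the $\K^{n+1}$-part of $\Omega$ is the alternating form whose $(ij\,n+1)$ components are $g^{0}_{ij}$. So the inverse is well defined, and both maps are linear in the coefficients.

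The third step is to justify the geometric interpretation: \eqref{pluck} is the annihilation condition for lines. Take the line spanned by $p=(u,1,0)$ and $q=(V,0,1)$ in $\K^{n+2}$, and contract, computing $\omega(p,q,e_{i})$ for $i\le n$. Expanding multilinearly gives
\[
T_{ijs}u^{j}V^{s}+g^{0}_{is}V^{s}-A_{is}u^{s}-B_{i}
\]
up to sign conventions. Its vanishing is exactly \eqref{pluck}. The remaining contractions, with $e_{n+1}$ and $e_{n+2}$, vanish automatically by skew-symmetry or follow from the first $n$ via $g_{is}V^{s}$ contracted with $V$ and $u$. This is the step I expect to be the main obstacle. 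Signs and factors of $2$ must be tracked so that the annihilation equations match \eqref{pluck} on the nose and not merely up to relabelling $A\mapsto-A$. I would settle it by fixing the normalisation $\omega=\sum_{i<j<k}\omega_{ijk}\,\dd u^{i}\wedge\dd u^{j}\wedge\dd u^{k}$ and computing directly. Any residual sign can be absorbed into the definition of $A,B$ in \eqref{pluck}, since the parametrisation \eqref{cond0} allows arbitrary $A,B$.
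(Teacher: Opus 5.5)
Your proposal is correct and follows essentially the route the paper indicates (the proof itself is cited from \cite{GvGV1}): \cref{th:corresp} together with the explicit solution \eqref{cond0} identifies a pair with the constant data $(T,g^0,A,B)$, the splitting \eqref{dec1} identifies these data with the components $\omega_{ijk},\omega_{ij\,n+1},\omega_{ij\,n+2},\omega_{i\,n+1\,n+2}$, and \eqref{pluck} is read as the annihilation condition $\omega(p,q,\cdot)=0$ for the line through $p=(u,1,0)$ and $q=(V,0,1)$.

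Two small corrections. First, the bijection only holds on the open set of $\omega$ with $\det(\omega_{ijk}u^k+\omega_{ij\,n+1})\not\equiv 0$, which is empty for odd $n$; it does not hold on all of $\Lambda^3\K^{n+2}$ as you (and the statement) assert. Second, in your third step there is no residual sign to absorb, because a direct computation gives exactly $\omega(p,q,e_i)=-(g_{is}V^s-A_{is}u^s-B_i)$ with $g_{is}=T_{isk}u^k+g^0_{is}$. So the right comparison is with \eqref{cond0}. The $T$-term of \eqref{pluck} as printed has the opposite sign; this is a misprint, and it is a relative sign that could not have been absorbed into $A,B$ anyway.
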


Intuitively, \Cref{thm11} means that we can consider a second-order operator and
the associated systems in $n$ components as part of a single object in $n+2$
components. This result, together with the known invariance properties of
congruence lines~\cite{agafer1,agafer2} reduces the problem of classifying all pairs
$(\mathcal{P},V)$, to the problem of classifying alternating three-forms in
$n+1$ variables and produce a list of canonical alternating two-forms with respect
of their stabiliser groups.

\subsection{Outline of the classification approach}

In this subsection we specify the strategy to classify the pairs
$(\mathcal{P},V)$ as we outlined it previously in~\cite{GvGV1}.
Consider the standard vector space of dimension $n+2$ over the field
$\KK$, namely $\KK^{n+2}$ and variables $u^1,\ldots,u^{n+2}$. Then, 
the space of alternating $k$-forms $\Lambda^k\KK^{n+2}$ is such that:
\begin{equation}
    \Lambda^k\KK^{n+2}=
    \Span_{\KK}
    \Set{\dd u^{i_1}\wedge \cdots \wedge \dd u^{i_k} | 1\leq i_1 < i_2 <\ldots<i_k\leq n+2}.
\end{equation}
In particular, we have the following decompositions of 
$\Lambda^3\KK^{n+2}$~\cite[Eq. (3.6-7)]{GvGV1}:
\begin{equation}
    \begin{aligned}
    \Lambda^3\K^{n+2}
    &=
    \Lambda^3\K^{n+1}\oplus\Lambda^2\K^{n}\oplus\Lambda^1\K^{n}
    \\
    &=\Lambda^3\K^{n+1}\oplus\Lambda^2\K^{n+1}.
    \end{aligned}
    \label{eq:decompNppN}
\end{equation}
More explicitly given $\Omega\in  \Lambda^3\mathbb{K}^{n+2}$ we can
write it as follows:
\begin{equation}
\label{dec1}
    \begin{aligned}
    \Omega &= \tilde{T}\,+\,A\wedge \dd u^{n+2}\,+\,B\wedge \dd u^{n+1}\wedge \dd u^{n+2}
    \\
    &=\tilde{T}\,+\,\tilde{A}\wedge \dd u^{n+2}~,    
    \end{aligned}
\end{equation}
where
$\tilde{T}\in\Lambda^3\mathbb{K}^{n+1}$, $A\in \Lambda^2\mathbb{K}^{n}$,
$B\in\Lambda^1\mathbb{K}^n\cong \K^n$ and $\tilde{A}\in \Lambda^2\mathbb{K}^{n+1}$.
In~\eqref{dec1} the first (second) line corresponds to the first
(second) line of~\eqref{eq:decompNppN}. The three-form $\tilde{T}$
corresponds to the leading coefficient $g$, while the two-form $\tilde{A}$ corresponds
to the system of conservation laws.

So, to classify the pairs $(\mathcal{P},V)$ we need to classify the
three-forms on $\mathbb{K}^{n+2}$ decomposed as in equation \eqref{dec1} up
to the action of the projective linear group $\PGL(n+1,\mathbb{K})$ in $n+1$ dimensions.
Moreover, we need to add the two following consistency conditions:
\begin{enumerate}
    \item the alternating three-form $\tilde{T}\in\Lambda^3\mathbb{K}^{n+1}$ must be 
        non-degenerate, i.e. it must define a non-degenerate second-order 
        homogeneous Hamiltonian operator;
    \item the alternating two-form $\tilde{A}\in\Lambda^2\mathbb{K}^{n+1}$ must be
        non-null, i.e. it must define a non-trivial system.
\end{enumerate}

Point (1) implies that we can use the classification of second-order operators
obtained in~\cite{VerVit2}. That is, we can start with a fixed alternating
three-form in $\tilde{T}\in \Lambda^3\mathbb{K}^{n+1}$ in a standard form,
and act with transformations leaving it invariant, i.e.\ with elements
of its stabiliser group $M\in\stab(\tilde{T})$ on the two form $\tilde{A}$
to find a standard form for it.

In~\cite{GvGV1}, we carried out the previous program for $n=2,4$, while in
this paper we carry it out for $n=6$ using the standard forms of the operators
given in~\cite{VerVit2}. Since, as mentioned above we need the stabiliser groups,
we will refer to~\cite{CohenHelminck1988} where those groups were given explicitly,
for more information see Appendix~\ref{app:forms}. 


\section{Classification of pairs $(\mathcal{P},V)$ for $n=6$}
\label{sec:class}

In this section, we present the classification of pairs $(\mathcal{P},V)$
for $n=6$ using the technique explained at the end of the previous
section.  As already mentioned, to be consistent with the existing
literature we will use the explicit projective classification
of second-order homogeneous Hamiltonian operators under the non-degeneracy
assumption of the leading coefficient $g^{ij}$ given in \cite{VerVit2}.
This consists of five different equivalence classes: 
\begin{enumerate}
    \item $\tilde T^{\text{I}}=(\dd u^1 \wedge \dd u^2 +\dd u^3 \wedge \dd u^4 + \dd u^5 \wedge \dd u^6)\wedge 
    \dd u^{7};$
    \item $\tilde T^{\text{II}}=du^1\wedge du^2 \wedge du^3 +du^4\wedge du^5 \wedge du^6 +du^1 \wedge du^4 \wedge du^7;$
    \item $\tilde T^{\text{III}}=\dd u^4\wedge \dd u^5\wedge \dd u^6 
    + \dd u^7\wedge(\dd u^1\wedge \dd u^4+\dd u^2\wedge \dd u^5+\dd u^3\wedge \dd u^6);$
    \item $\tilde T^{\text{IV}}=\dd u_1 \wedge \dd u_2 \wedge \dd u_3 
    + \dd u_4 \wedge \dd u_5 \wedge \dd u_6 
    + (\dd u_1 \wedge \dd u_4 +\dd u_2 \wedge \dd u_5) \wedge \dd u_7,$
    \item $\tilde T^{\text{V}}=du^1 \wedge du^2 \wedge du^3 +du^4 \wedge du^5 \wedge du^6
+du^7 \wedge (du^1 \wedge du^4 +du^2 \wedge du^5 +du^3 \wedge du^6).$
\end{enumerate}
Corresponding to the above five 
orbits one can compute the associated second-order operators as in \cite{VerVit2}. 
We here classify for each operator the compatible quasilinear systems of 
first order PDEs, in terms of $g^{ij},A_{ij}$ and $B_i$ and referring to \eqref{cond0} for the explicit construction of the fluxes $V^i$. 

Before presenting the results, we observe that the first orbit is treated
in a more general $n+2$ dimensional case, as it is observed that this case and
the special $n=6$ case pose no additional difficulty whatsoever. This is because
this is the only case where the leading coefficient $g$ constant,
and the stabiliser of the corresponding form $\tilde{T}$ is related to 
the symplectic group $\Sp(n+2,\KK)$.

\subsection{First orbit: the symplectic case}

Following~\cite{VerVit2} the first leading coefficients of the operators $\mathcal{C}$ 
is inverse matrix of the following:
\begin{equation}g_{ij}^1=\begin{pmatrix}
0&0&0&1&0&0\\
0&0&0&0&1&0\\
0&0&0&0&0&1\\
-1&0&0&0&0&0\\
0&-1&0&0&0&0\\
0&0&-1&0&0&0
\end{pmatrix}
\end{equation}
we have $\det (g^1_{ij})=1$, and whose associated three form is:
\begin{equation}
    \tilde{T}_1=(\dd u^1 \wedge \dd u^4 +\dd u^2 \wedge \dd u^5 + \dd u^3 \wedge \dd u^6)\wedge 
    \dd u^{7}.
    \label{eq:T1}
\end{equation}
This form corresponds, up to an inessential permutation of variables,
to the form $f_8$ of~\cite{CohenHelminck1988}, see also Appendix~\ref{app:forms}.

It is easy to see that the three-form~\eqref{eq:T1} can be written as:
\begin{equation}
    \tilde{T}_1=\eta_3\wedge \dd u^{7},
\end{equation}
where
\begin{equation}
    \eta_3 = \dd u^1 \wedge \dd u^2 +\dd u^3 \wedge \dd u^4 + \dd u^5 \wedge \dd u^6,
\end{equation}
is the standard symplectic form of $\KK^6$. This is clearly a generalisation
of what we considered in~\cite[\S 5.b]{GvGV1}, where considering one of
the two non-degenerate orbits for $n=4$ we observed it was expressible
as $\eta_2\wedge\dd u^5$, with $\eta_2$ the standard symplectic form of $\KK^4$.
Observe that $\stab (\tilde{T}_1)=\Sp(6)\rtimes \KK^6$, see Appendix~\ref{app:forms}.

Now, we should use the stabiliser to act on the two-form $\tilde{A}$, or
equivalently on a $(n+1)\times (n+1)$ skew-symmetric matrix. The action of
the symplectic group on skew-symmetric matrices is a well studied topic,
see e.g.~\cite{Thompson,Tyurin,Rodman,Lancaster,Dmytryshyn_KS,Dmytryshyn_D}.  For
this reason, it is possible to tackle the problem in full generality, i.e.
to consider the following alternating three-form in
$\Lambda^3\mathbb{K}^{2n+1}$:
\begin{equation}
    \tilde{T}_1^{2n+1}=\eta_{n}\wedge \dd u^{2n+1},
\end{equation}
where $\eta_n$ is the standard symplectic form of $\KK^{2n}$:
\begin{equation}
    \eta_n = \sum_{i=1}^{n} \dd u^{2i-1}\wedge\dd u^{2i}.
\end{equation}
The case $n=3$ will be then obtained as a particular one, and also the cases 
$n=2,4$ discussed in~\cite{GvGV1} are re-obtained. 

\begin{remark}
    We observe that, in principle, one could consider instead of $\eta_n$ 
    a generic non-degenerate two-form $\omega\in\Lambda^2\mathbb{K}^{2n}$.
    Indeed, applying the Darboux theorem to $\omega$, we can map it into $\eta$,
    so our choice is not restrictive.    
\end{remark}

Using the construction introduced in \cite{VerVit2}, we can associate to $\tilde{T}_1$ 
the second-order homogeneous operator in constant form (and Doyle-Pot\"emin canonical form):
\begin{equation}\label{cpontham}
    \mathcal{C}^{ij}=\partial_x\circ g^{ij}\circ \partial_x=g^{ij}\partial_x^2, \qquad g^{ij}=\begin{pmatrix}
        0&-\mathbb{1}\\\mathbb{1}&0
    \end{pmatrix}
\end{equation}
with $\mathbb{1}$ the identity $n\times n$ matrix.  
We now want to characterise the Hamiltonian system of first-order PDEs, 
associated to \eqref{eqsys}:
\begin{equation}\label{sysgen}
    V^i=g^{is}\left(A_{sl}u^l+B_s\right) \quad \Rightarrow \quad V^i_{,j}=g^{is}A_{sj}, 
\end{equation}
so that $B$ does not play any role in this context. The resulting system 
is then always linear and not of great interest from a physical point of 
view. Moreover, in this case one can always assume $\det(A)\neq 0$, otherwise
one could choose a particular change of variables $\{\tilde{u}^1,\dots \tilde{u}^{2n}\}$ such that 
\begin{equation}
    A_{i\, 2n-1}=A_{2n-1\, i}=A_{i\, 2n}=A_{2n\, i}=0, \qquad i=1,2,\dots 2n
\end{equation}
so that the associated system $V^i_{,j}$ becomes a $(n-2) \times (n-2) $ 
hydrodynamic type system, that is the system is degenerate and not of interest.

If we want to classify such Hamiltonian systems we can apply Theorem \ref{thm11}, 
so that we need to act on $A$ with the stabilizer of $\tilde{T}_1$. Note, in addition, 
that $A\in\Lambda^2\mathbb{K}^{2n}$ is a symplectic form (not a priori in 
Darboux coordinates as $\eta$).
In this case we have $\stab(\tilde{T}_1) = \Sp(2n)\rtimes \K^{2n}$ which has the following
matrix representation:
\begin{equation}
    \stab (\tilde{T}_1) = \Set{
    M \in \SL(2n+1,\K) |
    M = 
    \begin{pmatrix}
        C & 0
        \\
        x^T & 1
    \end{pmatrix},
    \qquad
    C\in\Sp(2n),x\in\K^{2n}}.
\end{equation}

As observed it is sufficient to classify only $A\in \Lambda^2\mathbb{K}^{2n}$ and by using the map
\begin{equation}
    \varphi_\eta: \Lambda^2 V\rightarrow \mathbb{K}, \quad v_i\wedge v_j\mapsto \eta(v_i,v_j)
\end{equation}we obtain that being $\eta$ non-degenerate the maps induces the splitting given by its kernel and the image subspace
\begin{equation}\label{decom_eta}
    \Lambda^2\mathbb{K}^{2n}=\mathbb{K}\eta \oplus \Theta,
\end{equation}
where $\Theta = \Set{\alpha\in\Lambda^2\K^{2n} | \alpha\wedge \eta^{\wedge^{n-1}}=0}$.
We can finally remark that 
\begin{equation}
    \Theta \cong \Set{K\in \text{Alt}(2n,\K) | \sum_{i=1}^{n}K_{i,i+n}=0},
    \label{eq:thetaisom}
\end{equation}
where $\text{Alt}(2n,\K)$ indicates the alternating $2n\times 2n$ matrices with entries in $\K$.

We make use of the following characterisation on the simultaneous
diagonalisation of alternating matrices:

\begin{theorem}[\cite{Lancaster}]\label{thm0}
        Let $L$ and $M$ be alternating $2n\times 2n$ matrices. Then, there exists $g\in GL_{2n}(\mathbb{K})$ such that 
        \begin{equation}
            gLg^T=\begin{pmatrix}
                0&\mathbb{1}\\-\mathbb{1}&0
            \end{pmatrix} 
            \qquad 
            \text{and} 
            \qquad 
            gMg^T=\begin{pmatrix}
                0&D\\-D^{T}&0
            \end{pmatrix}
        \end{equation}
        where $D$ is in one of the two following form
        \begin{equation}\label{cases}
            \begin{pmatrix}
                \lambda_1&&&\\
                &\lambda_2&&\\
                &&\ddots&\\
                &&&\lambda_n
            \end{pmatrix}, \qquad \text{or} \qquad \begin{pmatrix}
                J_1&&&\\
                &J_2&&\\
                &&\ddots&\\
                &&&J_k
            \end{pmatrix}
        \end{equation}
        and $J_\ell$ is a Jordan block with eigenvalue $\lambda_\ell$, such that $\sum |J_\ell|=n$.
\end{theorem}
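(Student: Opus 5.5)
The plan is to work with a single linear operator instead of the pair of forms. First I will assume $\K$ algebraically closed (over $\mathbb{R}$ the diagonal/Jordan form with possibly non-real $\lambda_\ell$ has to be read over $\mathbb{C}$). For $g$ to produce the identity block in $gLg^T$, $L$ has to be non-degenerate, so I will assume $\det L\neq 0$. Set $V=\K^{2n}$ and $\omega(x,y)=x^TLy$.

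Define $S$ by $M=LS$, i.e.\ $x^TMy=\omega(x,Sy)$. Since $M$ and $L$ are both alternating, a one-line check gives
\begin{equation*}
\omega(Sx,y)=\omega(x,Sy)\quad\text{for all }x,y\in V,
\end{equation*}
so $S$ is $\omega$-self-adjoint. In these terms the statement becomes: find a basis $e_1,\dots,e_n,f_1,\dots,f_n$ with $\omega(e_i,f_j)=\delta_{ij}$ and $\omega(e_i,e_j)=\omega(f_i,f_j)=0$, such that $\operatorname{span}(e_i)$ and $\operatorname{span}(f_i)$ are $S$-invariant and $S$ acts on them through $D$ and $D^T$ respectively. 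Changing to this basis then puts $L$ and $M$ simultaneously into the block forms of the statement.

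The first step is the primary decomposition $V=\bigoplus_\lambda V_\lambda$ into generalised eigenspaces of $S$. Self-adjointness gives $\omega\bigl((S-\lambda)^k x,y\bigr)=\omega\bigl(x,(S-\lambda)^k y\bigr)$. For $x\in V_\lambda$ and $y\in V_\mu$ with $\lambda\neq\mu$, the operator $S-\lambda$ is invertible on $V_\mu$, so $\omega(x,y)=0$. Hence the $V_\lambda$ are mutually $\omega$-orthogonal and each is symplectic. This reduces the problem to a nilpotent self-adjoint $N=S-\lambda$ on a symplectic space.

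The second step, which I expect to be the main one, is an inductive splitting of a nilpotent self-adjoint $N$. I will take $v$ of maximal height $k$, so that $N^{k-1}v\neq 0$ and $N^k=0$ on the whole space. Non-degeneracy of $\omega$ gives a $w$ with $\omega(N^{k-1}v,w)=1$. Set $U=\operatorname{span}(v,\dots,N^{k-1}v)$ and $U'=\operatorname{span}(w,\dots,N^{k-1}w)$.

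$U$ is automatically isotropic. Indeed $\omega(N^iv,N^jv)=\omega(v,N^{i+j}v)$, and
\begin{equation*}
\omega(v,N^mv)=\omega(N^mv,v)=-\omega(v,N^mv),
\end{equation*}
so this pairing vanishes (characteristic $\neq 2$). The same argument shows $U'$ is isotropic.

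The pairing matrix $\omega(N^iv,N^jw)=\omega(v,N^{i+j}w)$ is Hankel. Its entries vanish for $i+j\geq k$, and on the antidiagonal $i+j=k-1$ they all equal $1$. It is therefore invertible, which forces $U\cap U'=0$ and makes $U\oplus U'$ a $2k$-dimensional $N$-invariant symplectic subspace. Its $\omega$-orthogonal complement is again symplectic and $N$-invariant by self-adjointness, so induction on dimension finishes the nilpotent case.

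The last step builds the adapted basis on each block. I take $e_i=N^{k-i}v$ and let $f_i\in U'$ be the $\omega$-dual basis. Then $\omega(e_i,Nf_j)=\omega(Ne_i,f_j)$, so $S$ acts on $U'$ by the transpose of its Jordan matrix on $U$. Adding $\lambda$ back and assembling the blocks over all eigenvalues and all steps of the induction, in the block-diagonal order, gives $gLg^T$ standard and $gMg^T$ with off-diagonal blocks $D$ and $-D^T$. Here $D$ is the Jordan form of $S$ restricted to a Lagrangian half, with $\sum|J_\ell|=n$, and a reversal permutation adjusts upper/lower Jordan conventions if needed. The diagonal alternative in \eqref{cases} is exactly the case where $S$ is semisimple.
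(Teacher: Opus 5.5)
Your proof is correct, but it does not follow the paper's route. The paper gives no proof of this statement and imports it from the literature on pairs of alternating matrices (\cite{Lancaster}, see also \cite{Thompson}). There the normal form is obtained from Kronecker's canonical form of the pencil $\lambda L-M$ under strict equivalence, together with the fact that strictly equivalent skew pencils are congruent.

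You argue directly with the $\omega$-self-adjoint operator $S=L^{-1}M$, and each step holds: the $\omega$-orthogonality of generalised eigenspaces, the isotropy of every cyclic subspace, the anti-triangular Hankel pairing that makes $U\oplus U'$ symplectic, and the dual-basis computation showing that $S$ acts on $U'$ by the transpose of its action on $U$. With $e_i=N^{k-i}v$ the upper-right block comes out as $D^T$, which is the convention issue you already flag.

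What your route buys is a short, self-contained proof that yields directly what the paper uses next. Since $V^i_{,j}=g^{is}A_{sj}$ is your $S$ for $L=\eta$, $M=A$, your adapted basis gives $V^i_{,j}$ similar to $\diag(D,D^T)$ at once. That is Corollary~\ref{cor1}, with every eigenvalue and Jordan block doubled. The pencil route buys generality: degenerate $L$ and real canonical forms.

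Your two added hypotheses are genuinely needed for the statement as printed:
\begin{itemize}
\item $\det L\neq 0$ is forced by $\det(gLg^T)=1$.
\item Over $\mathbb{R}$ the spectrum of $S$ is invariant, since $S\mapsto g^{-T}Sg^{T}$. So, for example, $L$ standard and $M$ in block form with $D=\bigl(\begin{smallmatrix}0&1\\-1&0\end{smallmatrix}\bigr)$ cannot be brought to a real diagonal or Jordan $D$.
\end{itemize}
If you want the real case, your argument adapts without a new idea:
\begin{itemize}
\item Split along the irreducible real factors $p$ of the minimal polynomial.
\item Replace Jordan chains by cyclic subspaces $\mathbb{R}[S]v$ with $v$ of maximal $p$-height $k$; these are still isotropic by your sign argument.
\item Choose $w$ with $\omega(p(S)^{k-1}v,w)\neq 0$. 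The kernel of the induced form $(f,h)\mapsto\omega(v,fh(S)w)$ on $\mathbb{R}[x]/(p^k)$ is an ideal, so if nonzero it would contain $(p^{k-1})$; hence the pairing is nondegenerate.
\end{itemize}
This gives $D$ in real Jordan form.
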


So, we have the following result:
\begin{theorem}\label{thm1}
    A hydrodynamic type system of conservation laws 
    \begin{equation}
        u^i_t=(V^i)_x, \qquad i=1,2,\dots n,
    \end{equation}
    that is Hamiltonian with a second-order operator \eqref{cpontham} 
    can always be mapped by reciprocal-projective transformations into 
    one depending at most on $n$ parameters. 
\end{theorem}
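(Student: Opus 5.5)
The plan is to reduce the classification of compatible pairs $(\mathcal{C},V)$ to the simultaneous normal form of two alternating matrices, and to read off the number of surviving parameters. By \Cref{thm11}, the pair corresponds to $\Omega=\tilde{T}_1^{2n+1}+\tilde{A}\wedge \dd u^{2n+2}$, and equivalence under reciprocal-projective transformations means acting on $\tilde{A}$ with $\stab(\tilde{T}_1)=\Sp(2n)\rtimes\K^{2n}$, up to the overall conformal factor of \Cref{th:corresp}. Using \eqref{sysgen}, $V^i_{,j}=g^{is}A_{sj}$ does not depend on $B$. I will therefore first use the translation part $x\in\K^{2n}$ of the stabiliser, or simply the irrelevance of $B$ for the system, to set $B=0$; it is then enough to classify $A\in\Lambda^2\K^{2n}$ under $\Sp(2n)$ together with rescalings. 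As remarked before the statement, I will also assume $\det A\neq 0$, since otherwise the system reduces to one with fewer components.

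Next I will apply \Cref{thm0} with $L=\eta$, the matrix of $g^{-1}$, and $M=A$. This gives $g\in\GL_{2n}(\K)$ with $g\eta g^T$ equal to the standard symplectic matrix and $gAg^T=\begin{pmatrix}0&D\\-D^T&0\end{pmatrix}$, where $D$ is either diagonal or in Jordan form. The key observation is that $g$ preserves $\eta$, so $g\in\Sp(2n)$ after reordering the Darboux basis, which is the inessential permutation relating the two conventions. Hence $g$ lies in the stabiliser, and the change of variables $u\mapsto g u$ is one of the admissible projective transformations. It carries $\mathcal{C}$ to itself and $A$ to the normal form.

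Then I will count parameters. In the diagonal case $D=\diag(\lambda_1,\dots,\lambda_n)$ has $n$ parameters. In the Jordan case $D=\diag(J_1,\dots,J_k)$ has at most $k\le n$ eigenvalue parameters, and the off-diagonal ones can be normalised to $1$. So the system $V^i=g^{is}A_{sl}u^l$, with $A$ in normal form, depends on at most $n$ parameters $\lambda_\ell$. The residual conformal rescaling can be used to normalise one nonzero $\lambda$, which gives an even better bound, but this is not needed.

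The main obstacle is justifying that the $g$ produced by \Cref{thm0} really belongs to $\stab(\tilde{T}_1)$, that is, to $\Sp(2n)$ for the paper's convention $\eta=\sum \dd u^{2i-1}\wedge \dd u^{2i}$ rather than the block form $\begin{pmatrix}0&\mathbb{1}\\-\mathbb{1}&0\end{pmatrix}$. I will handle this by conjugating with the permutation matrix $P$ that sends one standard form to the other; then $P^{-1}gP\in\Sp(2n)$ in the paper's convention. A second point to check is that both the $\K$ translation part and the discarded $B$ really do not affect the fluxes modulo trivial terms. This holds because \eqref{sysgen} shows that $B$ drops out of the velocity matrix.
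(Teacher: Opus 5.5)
Your proposal is correct and is essentially the paper's argument: you use Lancaster's simultaneous normal form (\Cref{thm0}) with $L=\eta$ already in Darboux form, so the normalising matrix is symplectic and lies in $\stab(\tilde T_1)$, and then you count the eigenvalues of $D$. The only cosmetic difference is that you apply \Cref{thm0} to $A$ itself rather than to the traceless part $\theta$ in $A=\alpha\eta+\theta$; since a symplectic $g$ fixes $\eta$, this only replaces $D$ by $\alpha I_n+D$ and gives the same bound of $n$ parameters.
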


\begin{proof}
    Following the decomposition in~\eqref{decom_eta} we have that
    a generic alternating two-form $A\in\Lambda^2\K^{2n}$ can be written
    as $A=\alpha \eta + \theta$, with $\alpha\in\K$ and $\theta\in\Theta$.
    Then we apply Theorem~\ref{thm0} to the matrices associated to the 
    forms $\eta$ and $\theta$, say $L$ and $M$ respectively. In particular, 
    we note that $L$ is already written in Darboux form. So, $g$ is a symplectic
    matrix, and the symplectic form $M$ can be of put in one of the two forms 
    described in \eqref{cases}. The free parameters involved are then the scalar 
    $\alpha$ and the eigenvalues
    of the matrix $D$ (counted with their multiplicities), 
    $\Set{\lambda_1,\ldots,\lambda_n}$, the latter being subject to the 
    ``trace'' condition~\eqref{eq:thetaisom}, i.e. $\sum_{i=1}^{n}\lambda_i=0$. 
    This implies that the number of free parameters is at most $1+n-1=n$ concluding
    the proof.
\end{proof}

Theorem~\ref{thm1} applied to \eqref{sysgen} gives us the form of
the system:
\begin{equation}
    V^i_{,j}=\begin{pmatrix}
        0&-\mathbb{1}\\
        \mathbb{1}&0
    \end{pmatrix}\begin{pmatrix}
        0&\alpha I_n+D\\-(\alpha I_n+D)&0
    \end{pmatrix}=\begin{pmatrix}
        \alpha I_n+D&0\\0&\alpha I_n+D
    \end{pmatrix}
\end{equation}
with the additional zero-trace condition of the matrix $D$, i.e.\ $D\in\mathfrak{sl}_n(\K)$.
So, as expected, the resulting system is linear, implying that the class of system 
in this orbit is linearisable. In particular, we have the following corollary:

\begin{corollary}\label{cor1} 
    Hydrodynamic type systems~\eqref{eqsys}
    which are Hamiltonian with \eqref{cpontham} are in block-diagonal form
    where each block is repeated twice. 
\end{corollary}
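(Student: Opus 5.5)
The corollary follows almost directly from \Cref{thm1} and the explicit velocity matrix computed immediately above it, so the plan is to make that computation rigorous and read off the block structure.

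\textbf{Reduction.} By \Cref{thm11}, a compatible pair $(\mathcal{C},V)$ corresponds to a three-form $\Omega=\tilde{T}_1+\tilde{A}\wedge \dd u^{2n+2}$. Since the leading coefficient is constant, \eqref{sysgen} shows that the velocity matrix is $V^i_{,j}=g^{is}A_{sj}$, independent of $B$. Acting with an element of $\stab(\tilde{T}_1)=\Sp(2n)\rtimes\K^{2n}$ preserves $\mathcal{C}$, since its $\Sp(2n)$ part fixes $g$. It changes $A$ by congruence, $A\mapsto CAC^T$. The translation part only affects $B$, which is irrelevant for the velocity matrix. So it suffices to find a normal form of $A$ under symplectic congruence.

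\textbf{Normal form.} Decompose $A=\alpha\eta+\theta$ as in \eqref{decom_eta}. Then apply \Cref{thm0} with $L$ the Darboux matrix of $\eta$ and $M$ the matrix of $\theta$. The key point is that the transforming matrix $g$ satisfies $gLg^T=L$, so it is symplectic and hence lies in the stabiliser. Under this $g$ we get
\begin{equation*}
A\mapsto\begin{pmatrix}0&\alpha\mathbb{1}+D\\-(\alpha\mathbb{1}+D)^T&0\end{pmatrix},
\end{equation*}
with $D$ diagonal or in Jordan form. Multiplying by $g^{ij}=\begin{pmatrix}0&-\mathbb{1}\\\mathbb{1}&0\end{pmatrix}$ gives
\begin{equation*}
V^i_{,j}=\begin{pmatrix}(\alpha\mathbb{1}+D)^T&0\\0&\alpha\mathbb{1}+D\end{pmatrix}.
\end{equation*}

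\textbf{Block structure.} This is block diagonal with the $n\times n$ block $\alpha\mathbb{1}+D$ appearing twice. If $D$ is diagonal the two blocks coincide literally. If $D$ is in Jordan form, $D^T$ is conjugate to $D$ by the reversal permutation $P$ on each Jordan block. To make the two blocks identical I would apply $\diag(P,P^{-T})$, which is symplectic (for a permutation matrix $P^{-T}=P$). This conjugates the first block to $\alpha\mathbb{1}+D$ and keeps the second block as $\alpha\mathbb{1}+D$. So the system splits into two identical decoupled copies of $u_t=(\alpha\mathbb{1}+D)u_x$, which is the claim.

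\textbf{Main obstacle.} I expect the hardest step to be justifying that the congruence in \Cref{thm0} can be taken inside $\Sp(2n)$, and tracking the transpose so that the two blocks are literally equal and not merely similar. This is handled by the observation $gLg^T=L$ together with the symplectic reversal permutation above.
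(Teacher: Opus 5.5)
Your reduction and normal-form steps follow the paper's route: restrict to $A\mapsto CAC^T$ with $C\in\Sp(2n)$, apply \Cref{thm0} noting that $gLg^T=L$ forces $g$ to be symplectic, then multiply by $g^{ij}$. Your transpose bookkeeping there is more careful than the paper's. The paper's display writes $-(\alpha I_n+D)$ instead of $-(\alpha I_n+D)^T$ in the lower-left block, which is alternating only for diagonal $D$, and so obtains $\diag(\alpha I_n+D,\alpha I_n+D)$.

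The problem is your last step. Under $C=\diag(P,P)$ the upper-right block of $A$ becomes $P(\alpha\mathbb{1}+D)P^T=(\alpha\mathbb{1}+D)^T$. The velocity matrix therefore becomes $\diag(\alpha\mathbb{1}+D,(\alpha\mathbb{1}+D)^T)$: both blocks are conjugated, and they are merely swapped. More generally, any block-diagonal symplectic $\diag(Q,Q^{-T})$ yields $\diag(Y^T,Y)$ with $Y=Q(\alpha\mathbb{1}+D)Q^{-1}$.

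No symplectic change can do better. The relation $g_{ij}V^j_{,k}=A_{ik}$ must produce an alternating $A$. For $V=\diag(Y,Y)$ it gives $A=\begin{pmatrix}0&Y\\-Y&0\end{pmatrix}$, which is alternating iff $Y=Y^T$. So $\diag(\alpha\mathbb{1}+D,\alpha\mathbb{1}+D)$ with a Jordan block of size $\ge 2$ is not the velocity matrix of any system Hamiltonian with $\mathcal{C}$. The ``main obstacle'' you set out to overcome cannot be overcome in Jordan form.

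The fix is to stop at $\diag((\alpha\mathbb{1}+D)^T,\alpha\mathbb{1}+D)$. The two blocks are transposes of each other, hence similar with identical Jordan structure. That is what ``each block is repeated twice'' can honestly mean, and it is all that is used afterwards: every eigenvalue is doubled, and $V$ is non-diagonalisable for non-diagonal $D$. For diagonal $D$ the blocks coincide literally.

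If you insist on literal equality over $\mathbb{C}$, replace the Jordan form by a complex symmetric matrix $Y$ similar to $\alpha\mathbb{1}+D$. Then use $\diag(Q,Q^{-T})$ with $Q(\alpha\mathbb{1}+D)Q^{-1}=Y$. Over $\mathbb{R}$ this is impossible as soon as $D$ has a non-trivial Jordan block, since real symmetric matrices are diagonalisable.
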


This last result directly implies that all the eigenvalues are double, 
as firstly proved in~\cite[Proposition 20]{VerVit2}. In particular, if the Jordan-block form of $D$ admits $n$ eigenvalues (i.e. $D$ is diagonal) we re-obtain also the second statement of Proposition 20. However, a non-diagonal structure of $D$ implies that the resulting matrix $V^i_j$ is non-diagonalisable.

We can finally reconstruct as particular cases the following examples for $n=1,2$, also presented in \cite{GvGV1}.
\begin{example}
    For $n=1$, every $2\times 2$ system with the above property 
    depend on one arbitrary parameter and it is always diagonal. 
    In this case, the free parameter can be scaled 
    away using an additional transformation.
\end{example}

\begin{example}
    For $n=2$, we obtain that $4\times 4$ systems depend at most on two
    parameters. Moreover, the structure of $D$ in $A$ can be one of the
    following:
    \begin{equation}
        D_1=\begin{pmatrix}
            \lambda_1&0\\
            0&\lambda_2
        \end{pmatrix}\qquad \text{or} \qquad D_2=
    \begin{pmatrix}
            \lambda_1&1\\
            0&\lambda_1
        \end{pmatrix}
    \end{equation}
    for arbitrary $\lambda_1$ and $\lambda_2$, with the additional
    requirement to be traceless. However, by using Corollary \ref{cor1} we
    obtain that the system associated to $D_1$ is diagonal whereas the one
    associated to $D_2$ is non-diagonalisable.  We finally stress that in
    $D_2$ the only possible case is that $\lambda_1=-\lambda_1=0$, so that
    the block is degenerate. To compare, we finally stress that in
    \cite{GvGV1}, the degenerate case was neglected.
\end{example}

    The last example in particular shows that not all systems associated to a homogeneous second-order Hamiltonian operator are diagonalisable. However, the additional condition $A_{13}+A_{24}=0$ restricts the possibility only to $A_1$, indeed otherwise $2\lambda_1=0$ implies the degeneracy of $A_2$. We stress that for $A_1$ we have $\lambda_2=-\lambda_1$ and the general structure of the two-form $A$ for $n=2$ is
    \begin{equation}
        A=\alpha \eta + \lambda_1(du^1\wedge du^3-du^2\wedge du^4).
    \end{equation}



\subsection{Second orbit}
In this paragraph, we consider $\tilde T^{\text{II}}$.
The associated leading coefficient is
\begin{equation}
g_{ij}^{\text{II}}=\begin{pmatrix}
0&u^3&-u^2&1&0&0\\
-u^3&0&u^1&0&0&0\\
u^2&-u^1&0&0&0&0\\
-1&0&0&0&u^6&-u^5\\
0&0&0&-u^6&0&u^4\\
0&0&0&u^5&-u^4&0
\end{pmatrix}\, .
\end{equation}

We stress that the representative here used corresponds to $f_5$ in \cite{CohenHelminck1988}, that is $\tilde{T}^{\text{II}}=f_5$ (see Appendix ~\ref{app:forms}). For this reason, we can use ~\cite{CohenHelminck1988} to construct the stabiliser
$\stab (f_5)$ as composed of matrices with the following 
shape\footnote{Here we correct a typo in the formula for the matrices
of the group $G_5:=\stab(f_5)$ presented in~\cite[\S 3.5]{CohenHelminck1988}: therein the coefficient $(5,1)$ of the matrix $M_5$ has the incorrect sign.}:
\begin{equation}
    M_5 = 
    \begin{pmatrix}
        (ad-bc)^{-1} & 0 & 0 & 0 & 0 & 0 & 0 
        \\
        \alpha_2 & a & b & -\lambda_1 & 0 & 0 & 0
        \\
        \alpha_1 & c & d & \lambda_2 & 0 & 0 & 0
        \\
        0 & 0 & 0 & (eh-fg)^{-1} & 0 & 0 & 0
        \\
        \lambda_3 & 0 & 0 & \alpha_4 & e & f & 0
        \\
        -\lambda_4 & 0 & 0 & \alpha_3 & g & h & 0
        \\
        \alpha_5 & \lambda_2 & \lambda_1 & \alpha_6 & \lambda_4 & \lambda_3 & (ad-bc) (eh-fg)
    \end{pmatrix}
\end{equation}
where $\alpha_i,\lambda_i,a,b,c,d,e,f,g,h\in\K$, subject to the condition that
$(ad-bc) (eh-fg)\neq 0$. We do not consider a discrete part of the group, since it will
not be used in our discussion. That is, we have $\dim\stab(f_5)=18$, and
it can be seen that the group has structure $\stab(f_5) \cong \K^{10}\rtimes (\GL(2,\K)\times\GL(2,\K))$. 

To find a standard form for the the generic two-forms 
$\tilde{A}^{\text{II}}=\tilde{A}^{\text{II}}_{ij} \dd u^i\wedge \dd u^j$ (here $i,j=1,\dots , 7$) 
we first check whether or not there is a subspace of such form invariant
for the action of $\stab(f_5)$, i.e.\ $\tilde A \longmapsto M_5^T \tilde A M_5=\tilde{A}$.
This is seen more easily considering the infinitesimal invariance condition,
i.e.\ by considering $M_5=I_{7}+m_5 + \ldots$, where $m_5\in\mathfrak{Lie}(\stab(\omega_5))$, the Lie algebra of the stabiliser. The infinitesimal condition
reads as $\tilde A \longmapsto m_5^T \tilde A+ \tilde A m_5=\mathbb{O}_{7,7}$, { where we indicate here and in what follows with $\mathbb{O}_{i,j}$ the $i\times j$ matrix with null coefficients}. This
gives a set of linear conditions for the coefficients of $\tilde{A}^{\text{II}}$ that admit
only the trivial solution. So, there is no invariant subspace and we can fix exactly 18 parameters of the 21 which are apriori free in $\tilde A^{\text{II}}$.

Then, by a direct computation with the Lie group $\stab(\omega_5)$ acting as above, i.e.\ as
$\tilde A \longmapsto M_5^T \tilde A M_5 $, we see that we can bring a generic 
$7\times 7$ skew-symmetric matrix $\tilde A$ in the following standard form:
\begin{equation}
    \tilde{A}^{\text{II}}=\begin{pmatrix}
        0&s_1&s_2&s_3&0&1&0\\
        -s_1&0&0&0&1&0&0\\
        -s_2&0&0&1&0&0&1\\
        -s_3&0&-1&0&0&0&0\\
        0&-1&0&0&0&0&0\\
        -1&0&0&0&0&0&1\\
        0&0&-1&0&0&-1&0
    \end{pmatrix},
\end{equation}
where $s_1,s_2,s_3$ are free parameters. In this case, 
\begin{align}\begin{split}
    A^{\text{II}}&= \dd u^1 \wedge (s_1  \dd u^2+ s_2 \dd u^3+s_3  \dd u^4 ) + \dd u^1\wedge \dd u^6+ \dd u^2\wedge \dd u^5+ \dd u^3\wedge \dd u^4 \\
    B^{\text{II}}&= (\dd u^3 + \dd u^6)\wedge \dd u^7  \end{split}
\end{align}

\subsection{Third orbit}
The leading coefficient of the operator associated to $\tilde T^{\text{III}}$ is 
\begin{equation}
    g_{ij}^{\text{III}}=\begin{pmatrix}
0&0&0&1&0&0\\
0&0&0&0&1&0\\
0&0&0&0&0&1\\
-1&0&0&0&u^6&-u^5\\
0&-1&0&-u^6&0&u^4\\
0&0&-1&u^5&-u^4&0
\end{pmatrix}
\end{equation} 
We remark that applying the linear transformation
\begin{equation}
    u^1\mapsto -u^5, \quad u^2\mapsto u^6, \quad u^3\mapsto u^7, \quad u^4\mapsto -u^2, \quad u^5\mapsto u^3 , \quad u^6\mapsto u^4, \quad  u^7\mapsto u^1
\end{equation} the three-form is mapped into 
\begin{equation}
    f_6= \dd u^1\wedge(\dd u^5\wedge \dd u^2+\dd u^7\wedge \dd u^4+\dd u^6\wedge \dd u^3)
    +\dd u^2\wedge \dd u^4\wedge \dd u^3
\end{equation}
which is case 3.6 in \cite{CohenHelminck1988}.

Following~\cite{CohenHelminck1988} we have the following structure of the stabiliser
$\stab (f_6)$ is composed of matrices $M_6=N_6L_6$, where the matrices
$N_6$ and $L_6$ have the following  shape\footnote{Here we correct two typos in the formula for the matrices
of the group $G_6:=\stab (f_6)$ presented in~\cite[\S 3.6]{CohenHelminck1988}: therein the coefficients $(5,3)$ and $(6,4)$ of the matrix $N_6$ have the incorrect sign.}:
\begin{subequations}
    \begin{align}
    N_6 &= 
    \begin{pmatrix}
        \lambda_1^{-1} & 0 & 0 & 0 & 0 & 0 & 0 
        \\
        \lambda_2 & 1 & 0 & 0 & 0 & 0 & 0
        \\
        \lambda_3 & 0 & 1 & 0 & 0 & 0 & 0
        \\
        \lambda_4 & 0 & 0 & 1 & 0 & 0 & 0
        \\
        \lambda_5 & 0 & -\lambda_4 & \lambda_3 & \lambda_1 & 0 & 0
        \\
        \lambda_6 & 0 & 0 & -\lambda_2 & 0 & \lambda_1 & 0
        \\
        \lambda_7 & 0 & 0 & 0 & 0 & 0 & \lambda_1 
            \end{pmatrix}
            \\[4pt]
    L_6 &=
    \begin{pmatrix}
        1 & \mathbb{O}_{1,3} & \mathbb{O}_{1,3} 
        \\
        \mathbb{O}_{3,1} & D & \mathbb{O}_{3,3} 
        \\
        \mathbb{O}_{3,1} & S & (D^{T})^{-1}
    \end{pmatrix}
    \end{align}
\end{subequations}
where $\lambda_i\in\K$, $D\in\SL(3,\K)$, and $S^T=S$. That is, we have 
$\dim\stab(f_6)=21$, and it can be shown that the group has structure 
$\stab(f_6) \cong \K^{12}\rtimes (\SL(3,\K)\times\K)$. 

Using the same argument as in the previous case,  through a direct computation with the Lie group $\stab(f_6)$ acting 
as above, i.e.\ 
$\hat A \longmapsto M_6^T \hat A M_6$ we see that there is no invariant subspace and we can map a generic 
$7\times 7$ skew-symmetric matrix $\tilde A$ in the following standard form:
\begin{equation}
    \hat{A}=\begin{pmatrix}
        0&0&1&0&0&0&1
        \\ 
        0&0&0&1&0&0&0
        \\ 
        -1&0&0&0&0&0&1
        \\
        0&-1&0&0&0&0&0
        \\
        0&0&0&0&0&1&0
        \\
        0&0&0&0&-1&0&0
        \\
        -1&0&-1&0&0&0&0
    \end{pmatrix},
\end{equation}
Returning to the original variables chosen in~\cite{VerVit2} through conjugation, 
we obtain:
\begin{equation}
    \tilde{A}^{\text{III}} =
    \begin{pmatrix}
        0& 0& 1& 0& 0& 0& 1
        \\
        0& 0& 0& 1& 0& 0& 0
        \\
        -1& 0& 0& 0& 0& 0& 1
        \\
        0& -1& 0& 0& 0& 0& 0
        \\
        0& 0& 0& 0& 0& 1& 0
        \\
        0& 0& 0& 0& -1& 0& 0
        \\
        -1& 0& -1& 0& 0& 0& 0
    \end{pmatrix},
\end{equation}
that is the following values of the forms $A$ and $B$:
\begin{align}\begin{split}
    A^{\text{III}}&= -\dd u^1 \wedge \dd u^2-\dd u^3\wedge \dd u^5-\dd u^4\wedge \dd u^6,
    \\
    B^{\text{III}}&=-\dd u^3-\dd u^5. 
    \end{split}
\end{align}
This finally determines the compatible class of hydrodynamic type systems.

\subsection{Forth orbit}

The leading coefficient of the operator associated to $\tilde T^{\text{IV}}$ is 
\begin{equation}
    g_{ij}^{\text{IV}}=\begin{pmatrix}
0&u^3&-u^2&1&0&0\\
-u^3&0&u^1&0&1&0\\
u^2&-u^1&0&0&0&0\\
-1&0&0&0&u^6&-u^5\\
0&-1&0&-u^6&0&u^4\\
0&0&0&u^5&-u^4&0
\end{pmatrix}\, .
\end{equation}

In this case, by applying the linear transformation
\begin{equation}
    u^1\mapsto u^4, \quad u^2\mapsto u^5, \quad u^3\mapsto u^2, \quad u^4\mapsto u^6, \quad u^5\mapsto u^7,   \quad u^6\mapsto u^3, \quad  u^7\mapsto u^1,
\end{equation}
the three-form is mapped into 
\begin{equation}
    f_7= \dd u^1 \wedge (\dd u^4 \wedge \dd u^6 + \dd u^5 \wedge \dd u^7) 
    + \dd u^2 \wedge \dd u^4 \wedge \dd u_5 
    + \dd u^3 \wedge \dd u^6 \wedge \dd u^7,
\end{equation}
which is case 3.7 in \cite{CohenHelminck1988} (see appendix \ref{app:forms}).

In this case we characterise the stabiliser $\stab(f_7)$ in a slightly
different way than it was done in~\cite{CohenHelminck1988}. Our approach will
yield a clearer insight on the structure of the Lie group and an easier 
decomposition which will simplify the subsequent computations. First of all,
we observe that it is easier to reconstruct the group $\stab(f_7)$ (or at least the
connected component  of identity, which is enough for our purposes), as
exponentiation of the corresponding Lie algebra $\mathfrak{Lie}(\stab(f_7))$.
For $M_7\in \stab(f_7)$ holds the invariance condition $M_7\cdot f_7 = f_7$
which infinitesimally reads as $\mathcal{L}_{m_7}f_7 = 0$, where
$\mathcal{L}_x$ is the Lie derivative in the direction of the vector
field $x$. This yields a set of linear equations, that can be readily
solved to give the following form of the matrix $m_7$:
\begin{equation}
    m_7 = \sum_{i=1}^{3}h_i H_i + \sum_{i=1}^{2}\left(x_iX_i+y_iY_i\right)
    +\sum_{i=1}^{8}b_iB_i.
\end{equation}
where
\begin{subequations}
    \begin{align}
        H_1 &= \diag( 0, -2, 2, 1, 1, -1, -1 ),
        \\
        H_2 &= \diag(0, 0, 0, -1, 1, 1, -1 ),
        \\
        H_3 &= \diag(-2, -2, -2, 1, 1, 1, 1 ),
        \\
        X_1 &=
        \begin{pmatrix}
            0&1&\multicolumn{5}{c}{\multirow{3}{*}{$\mathbb{O}_{3,5}$}}
            \\
            0&0&
            \\
            2&0&
            \\
            \multicolumn{5}{c}{\multirow{2}{*}{$\mathbb{O}_{2,5}$}}
            &0&1
            \\
            &&&&&-1&0
            \\
            \multicolumn{7}{c}{\mathbb{O}_{2,7}}
        \end{pmatrix},
        \\
        Y_1 &= 
        \begin{pmatrix}
            0&0&1&\multicolumn{4}{c}{\multirow{2}{*}{$\mathbb{O}_{2,4}$}}
            \\
            2&0&0&
            \\
            \multicolumn{7}{c}{\mathbb{O}_{3,7}}
            \\
            \multicolumn{3}{c}{\multirow{2}{*}{$\mathbb{O}_{2,3}$}}&0&-1&0&0
            \\
            &&&1&0&0&0
        \end{pmatrix},
        \\
        X_2 &= 
        \begin{pmatrix}
            \multicolumn{7}{c}{\mathbb{O}_{4,7}}
            \\
            \multicolumn{3}{c}{\multirow{3}{*}{$\mathbb{O}_{3,3}$}}&-1&0&0&0
            \\
            &&&0&0&0&1
            \\
            &&&0&0&0&0
        \end{pmatrix}, 
        \\
        Y_2 &=
        \begin{pmatrix}
            \multicolumn{7}{c}{\mathbb{O}_{3,7}}
            \\
            \multicolumn{4}{c}{\multirow{4}{*}{$\mathbb{O}_{4,4}$}}&-1&0&0
            \\
            &&&&0&0&0
            \\
            &&&&0&0&0
            \\
            &&&&0&1&0
        \end{pmatrix},
        \\
        \sum_{i=1}^{8}b_iB_i
    &=
    \begin {pmatrix} 
        \multicolumn{3}{c}{\multirow{3}{*}{$\mathbb{O}_{3,3}$}}& b_1& b_{2}& b_3& b_4
        \\ 
        \multicolumn{3}{c}{}&b_5 &b_6&b_{2}&-b_1
        \\
        \multicolumn{3}{c}{}& -b_4& b_3 & b_7 & b_8
        \\
        \multicolumn{3}{c}{\mathbb{O}_{4,3}}&\multicolumn{4}{c}{\mathbb{O}_{4,4}}
\end {pmatrix}.
    \end{align}
\end{subequations}
It is easy to see that $\mathfrak{b} := \langle B_1,\ldots,B_8\rangle$ form
an abelian subalgebra of $\mathfrak{Lie}(\stab(f_7))$. Moreover,
$\mathfrak{s}_i = \langle H_i,X_i,Y_i\rangle$, $i=1,2$ are two subalgebras
isomorphic to $\mathfrak{sl}(2,\K)$, and are such that:
\begin{equation}
    [\mathfrak{s}_1,\mathfrak{s}_2] = 
    [\mathfrak{s}_1,H_3] =
    [\mathfrak{s}_2,H_3] = 0,
    \quad
    [\mathfrak{s}_i,\mathfrak{b}] \subseteq \mathfrak{b},
    \quad
    [H_3,\mathfrak{b}] \subseteq \mathfrak{b},
\end{equation}
That is we have the following decomposition $\mathfrak{Lie}(\stab(f_7)) 
\cong (\mathfrak{sl}(2,\K)\oplus\mathfrak{sl}(2,\K)\oplus \K)\oplus_s \K^8$,
where $\oplus_s$ denotes the semidirect sum. In particular 
$\dim \mathfrak{Lie}(\stab(f_7)) =15$.

So, the group, or at least its connected 
component of the identity, which is enough for our purposes, is constructed by
exponentiation of the three subalgebras, i.e.\ $\stab(f_7)
\cong (\SL(2,\K)\times\SL(2,\K)\times \K)\rtimes \K^8$. More
explicitly $M_7\in \stab(\omega_7)$ decomposes as $M_7 = P_7Q_7R_7S_7$ where:
\begin{subequations}
    \begin{align}
        P_7 &=
        \begin{pmatrix}
            1&0&0&0&0&0&0
            \\
            0&1&0&0&0&0&0
            \\
            0&0&1&0&0&0&0
            \\
            0&0&0&\eta_2^{-1}&-\upsilon_2/\eta_2&0&0
            \\
            0&0&0&-\eta_2\xi_2 &\eta_2\left( \xi_2\upsilon_2+1 \right) &0&0
            \\
            0&0&0&0&0&\eta_2\left( \xi_2\upsilon_2+1\right) &\eta_2\xi_2
            \\
            0&0&0&0&0&\upsilon_2/\eta_2&\eta_2^{-1}
        \end{pmatrix}
        \\
        Q_7 &
        \begin{aligned}[t]
        &= \diag\left(
        1, \eta_1^{-2},\eta_1^{-2},\eta_1,\eta_1,\eta_1^{-1},\eta_1^{-1}
        \right)
        \\
        &\cdot
        \begin{pmatrix}
            1+2\xi_1\upsilon_1&\xi_1&\xi_1\upsilon_1(1+\upsilon_1)&0&0&0&0
            \\
            2\upsilon_1& 1 & \upsilon_1^{2}&0&0&0&0
            \\
            2\xi_1(1+\upsilon_1) & \xi_1^{2}&(1+\xi_1\upsilon_1)^2&0&0&0&0
            \\
            0&0&0&\xi_1\upsilon_1+1&0&0&\xi_1
            \\
            0&0&0&0&1+\xi_1\upsilon_1&-\xi_1&0
            \\
            0&0&0&0&-\upsilon_1&1&0
            \\
            0&0&0&\upsilon_1&0&0&1
        \end{pmatrix}   
        \end{aligned}
        \\
        R_7 &=
        \diag\left(
        \eta_3^{-2}, \eta_3^{-2},\eta_3^{-2},\eta_3,\eta_3,\eta_3,\eta_3
        \right)
        \\
        S_7 &=
        \begin{pmatrix}
        1&0&0&\beta_1&\beta_2&\beta_3&\beta_4
        \\
        0&1&0&\beta_5&\beta_6&\beta_2&-\beta_1
        \\
        0&0&1&-\beta_4&\beta_3&\beta_7&\beta_8
        \\
        0&0&0&1&0&0&0
        \\
        0&0&0&0&1&0&0
        \\
        0&0&0&0&0&1&0
        \\
        0&0&0&0&0&0&1
        \end{pmatrix},
    \end{align}
\end{subequations}
with parameters $\xi_i,\upsilon_i$, $i=1,2$, $\eta_i$, $i=1,2,3$, 
and $\beta_i$, $i=1,\ldots,8$. Clearly, like for the associated Lie algebra,
we have $\dim \stab(f_7) =15$.

We are now in place to find the a standard form for the the generic two-form 
$\tilde{A}=\tilde{A}_{ij} \dd u^i\wedge \dd u^j$ (here $i,j=1,\dots , 7$)
up to the action of $\stab(\omega_7)$. Like in the previous cases, we 
first check whether or not there is a subspace of such form invariant
for the action of $\stab(f_7)$ through the infinitesimal condition
$\tilde A \longmapsto m_7^T \tilde A\, + \tilde A m_7=\mathbb{O}_{7,7}$ with
$m_7\in\mathfrak{Lie}(\stab(f_7))$. Solving the associated conditions
we obtain that there is no invariant subspace and we can fix all the 15 parameters 
of $\tilde A$.

Again, through a direct computation with the Lie group $\stab(f_7)$ acting 
as above, i.e.\ 
$\tilde A \longmapsto M_7^T \tilde A M_7$ we see that we can bring a generic 
$7\times 7$ skew-symmetric matrix $\tilde A$ in the following standard form:
\begin{equation}
    \tilde{A}=
    \begin{pmatrix}
        0 & 0 & 0 & 0 & 1 & 0 & 1
        \\
        0 & 0 & s_{1} & 0 & 0 & 0 & 0
        \\
        0 & -s_{1} & 0 & 0 & 0 & 0 & 0
        \\
        0 & 0 & 0 & 0 & 1 & s_{2} & s_{3}
        \\
        -1 & 0 & 0 & -1 & 0 & s_{4} & s_{5}
        \\
        0 & 0 & 0 & -s_{2} & -s_{4} & 0 & s_{6}
        \\
        -1 & 0 & 0 & -s_{3} & -s_{5} & -s_{6} & 0
    \end{pmatrix}
\end{equation}
where $s_i$, $i=1,\ldots,6$, are free parameters. 
Returning to the original variables chosen in~\cite{VerVit2} through conjugation, 
we obtain:
\begin{equation}
    \tilde{A} =
    \begin{pmatrix}
        0&1&0&s_{{2}}&s_{{3}}&0&0
        \\
        -1&0&0&s_{{4}}&s_{{5}}&0&-1
        \\
        0&0&0&0&0&s_{{1}}&0
        \\
        -s_{{2}}&-s_{{4}}&0&0&s_{{6}}&0&0
        \\
        -s_{{3}}&-s_{{5}}&0&-s_{{6}}&0&0&-1
        \\
        0&0&-s_{{1}}&0&0&0&0
        \\
        0&1&0&0&1&0&0
    \end{pmatrix},
\end{equation}
that is the following values of the forms $A$ and $B$:
\begin{align}\begin{split}
    A^\text{IV}&
    \begin{aligned}[t]
        = \dd u^1 \wedge (\dd u^2 +s_2 \dd u^4+s_3 \dd u^5)
    &+ \dd u^2 \wedge (s_4\dd u^4+s_5\dd u^5)
    \\&+s_1 \dd u^3 \wedge \dd u^6 +s_6 \dd u^4 \wedge \dd u^5,
    \end{aligned}
    \\
    B^\text{IV}&=-\dd u^2-\dd u^5. 
    \end{split}
\end{align}
This finally determines the compatible class of hydrodynamic type systems.

\subsection{Fifth orbit: the  $\mathfrak{g}_2$ case}

Let us start from the open orbit. It gives the following leading coefficient of the operator
\begin{equation}
g_{ij}^{\text{V}}=\begin{pmatrix}
0&u^3&-u^2&1&0&0\\
-u^3&0&u^1&0&1&0\\
u^2&-u^1&0&0&0&1\\
-1&0&0&0&u^6&-u^5\\
0&-1&0&-u^6&0&u^4\\
0&0&-1&u^5&-u^4&0
\end{pmatrix}
\end{equation}
where $\text{det}(g^{\text V}_{ij})=(u^1u^4+u^2u^5+u^3u^6-1)^2$;

The three-form used in \cite{VerVit2} is
\begin{equation}
    \begin{aligned}
   \tilde{T} &=  \dd u^1 \wedge \dd u^2 \wedge \dd u^3 
   +\dd u^4 \wedge \dd u^5 \wedge \dd u^6
    \\
    &+\dd u^7 \wedge (\dd u^1 \wedge \dd u^4 +\dd u^2 \wedge \dd u^5 +\dd u^3 \wedge \dd u^6)     
    \end{aligned}
\end{equation}
and this is exactly  the representative $f_9$ from the classification in~\cite{CohenHelminck1988}, 
see also Appendix~\ref{app:forms}. However, according to \cite{Fulton_H} there is a 
convenient basis of the differentials $\{v^4,v^3,v^1,u,w^1,w^3,w^4\}$ to describe the action 
of $G_2$ on $\Lambda^2\K$. In this basis, the three-form reads as 
\begin{equation}
    \omega= w^3\wedge u\wedge v^3+v^4\wedge u\wedge w^4+w^1\wedge u\wedge v^1+2\left(v^1\wedge v^3\wedge w^4+w^1\wedge w^3\wedge v^4\right)
    \label{omeg_g2}
\end{equation}
where we used the transformation given by the following mapping
\begin{equation}
    \begin{gathered}
    \dd u^1\mapsto \sqrt[3]{2} \, v^1, \quad \dd u^2\mapsto \sqrt[3]{2}\, v^3, 
    \quad \dd u^3\mapsto \sqrt[3]{2}\, w^4, 
    \\
    \dd u^4\mapsto \sqrt[3]{2}\, w^1, \quad \dd u^5\mapsto \sqrt[3]{2}\, w^3, 
    \quad \dd u^6\mapsto \sqrt[3]{2}\, v^4, \quad \dd u^7\mapsto u/\sqrt[3]{4}
    \end{gathered}
\end{equation}
note that here the transformation is between the differentials and not for the local set of coordinates.
\vspace{3mm}

We first recall that  the vector space $\Lambda^3\KK^7$ has dimension $35$ and under the action
of $GL(7,\KK)$ there is an open orbit. 
The subgroup of $\SL(7,\KK)$ that fixes $\omega$ is a ($\KK$-form of) the Lie group of type $G_2$
of dimension $14$
(cf.\ \cite[Prop.\ 22.12]{Fulton_H}).
Using the usual decomposition in the new basis, $\omega$ is fixed by $G_2$, 
and so is $\dd u^8$ (that is unchanged by the introduced transformation), we only 
need to understand the action of $G_2$ on $\Lambda^2\KK^7$. The vector space 
$\Lambda^2\KK^7$ has dimension $\binom{7}{2}=21$.
This $G_2$-representation decomposes into irreducible $\lieg_2$-representations as (\cite[\S 22.3]{Fulton_H}):
\begin{equation}
    \Lambda^2\KK^7\,=\,\lieg_2\,\oplus\,\KK^7~,    
\end{equation}
where the two summands are, as $G_2$-representations, the adjoint representation and 
the standard representation respectively.
We remark that the Lie algebra $\lieg_2$ is a subalgebra of the algebra $M_7(\KK)$ of $7\times 7$ 
matrices and the action of $G_2$ on $\lieg$ is by conjugation: $g:X\mapsto gXg^{-1}$. The characteristic 
polynomial of $X$ is thus an invariant for this action. A general $X\in \lieg_2$ can be diagonalized by 
certain $g\in G_2$, so that the  eigenvalues of $X$ depend on two parameters (since $\lieg_2$ has rank 
two), hence there are two invariants for the action of $G_2$ on $\lieg_2$. The standard representation 
$\KK^7$ has a $G_2$-invariant quadratic form, in fact $G_2\subset SO(7)\subset GL(7,\KK)$, and one finds 
that this quadratic form provides the only $G_2$-invariant on $\KK^7$. Thus there are at least three 
$G_2$-invariants on $\Lambda^2\KK^7$. However, since $\dim \wedge^2\KK^7\,-\,\dim G_2=21-14=7$, there 
must be more invariants and one expects the general orbit to have dimension $14$, so that it depends on 
$7$ parameters.

Roughly speaking, we have to carry out the construction described in what follows. Given a general element $(X,v)\in \lieg_2\oplus\KK^7$, one can first use the $G_2$-action to diagonalize
$X$ and after this we denote the transformed element by $(H,w)$.
The stabilizer in $G_2$ of a general diagonal matrix is the subgroup of diagonal matrices, which has dimension two. These diagonal matrices act, in a suitable basis, as diagonal matrices also on $\KK^7$ and
thus for a general $w$ we may assume that two of the seven coordinates are equal to $1$. There remain $7-2=5$ free coordinates and adding the two parameters for the eigenvalues of $X$ we get $5+2=7$ parameters which
determine the orbit.

\vspace{3mm}

We make the decomposition of $\Lambda^2\K^7$ outlined in the previous paragraph explicit, using the basis $v^4,\dots w^4$ as in \eqref{omeg_g2}. The advantage of this basis is that the action of the Lie algebra $\lieg_2$ of $G_2$ on $\KK^7$ is
easy to describe. The $14$-dimensional Lie algebra $\mathfrak{g}_2$ is generated by 4 elements, $X_1,Y_1,X_2,Y_2$ and
their action on the chosen basis is given in \Cref{tab:action}, see 
\cite[p.\ 354]{Fulton_H}.

\begin{table}[ht!]
  \begin{center}
    \label{G2action}
    \begin{tabular}{cccccccc}
    \toprule
    &$v^4$&$v^3$&$v^1$&$u$&$w^1$&$w^3$&$w^4$\\
    \midrule
    $X_1$&$0$&$v^4$&$0$&$2v^1$&$u$&$0$&$-w^3$\\
    $Y_1$ &$v^3$ &$0$&$u$&$2w^1$&$0$&$-2w^4$&$0$\\
    $X_2$ &$0$ &$0$&$-v^3$&$0$&$0$&$w^1$&$0$\\
    $Y_2$ &$0$&$-v^1$&$0$&$0$&$w^3$&$0$&$0$\\
    \bottomrule
    \end{tabular}
  \end{center}
  \caption{Action of the generators $X_i$ and $Y_i$ on the basis of $\K^7$.}
 \label{tab:action}
\end{table}
The subalgebra of diagonal matrices (the Cartan subalgebra $\lieh$) in $\lieg_2$ is spanned by
\begin{equation}
H_i\,:=\,[X_i,Y_i],\quad (i=1,2),\qquad
\left\{\begin{array}{rcl}
        H_1&=&\mbox{diag}(1,-1,2,0,-2,1,-1),\\
        H_2&=&\mbox{diag}(0,1,-1,0,1,-1,0)~.
       \end{array}\right.    
\end{equation}
Thus the diagonal matrices in $G_2$ are $\mbox{diag}(\lambda,\lambda^{-1}\mu,\lambda^2\mu^{-1},1,
\lambda^{-2}\mu,\lambda\mu^{-1},\lambda^{-1})$. In particular, given a vector $(x_1,x_2,x_3,\ldots, x_7)$ with
$x_1,x_2\neq 0$, there is a diagonal matrix in $G_2$ that maps it to $(1,1,x_3,\ldots , x_7)$.

\vspace{3mm}


As outlined above, our goal is to identify the subspaces $\lieg_2$ and $\KK^7$ 
inside $\Lambda^2 \KK^7$. These subspaces are characterized as $\lieg_2$-subrepresentations. 
Consequently, the problem is reduced to studying the action of $\lieg_2$ on $\Lambda^2 \KK^7$ and 
comparing it with its action on the adjoint representation $\lieg_2$ and on the fundamental representation $\KK^7$.

The key ingredient is the Cartan subalgebra $\mathfrak h \subset \lieg_2$, 
namely the two-dimensional vector subspace spanned by $H_1$ and $H_2$. A fundamental 
property of $\mathfrak h$ is that its action is simultaneously diagonalisable on every 
finite-dimensional representation of $\lieg_2$. Thus, if $V$ is a representation of 
$\lieg_2$ and $v \in V$ is a common eigenvector for the action of $\mathfrak h$, then $v$ 
determines a linear functional
\begin{equation}
    \alpha_v \colon \mathfrak h \longrightarrow \KK,    
\end{equation}
defined by
\begin{equation}
    H \cdot v = \alpha_v(H)v \qquad \forall H \in \mathfrak h.    
\end{equation}
The functional $\alpha_v$ is called the \emph{weight} of $v$, see~\cite[p.\ 165]{Fulton_H}. The simple roots $\alpha_1,\alpha_2$ are weights of the adjoint representation and they are defined by
\begin{equation}
    \alpha_{1}(H_1)=\alpha_2(H_2)=2, \quad \alpha_1(H_2)=-3, \quad \alpha_2(H_1)=-1.
\end{equation} 

The strategy is therefore to compute the weight vectors of $\Lambda^2 \KK^7$. Since
$\Lambda^2 \K^7 \cong \lieg_2 \oplus \KK^7$, every weight vector of $\Lambda^2 \KK^7$ must 
belong either to the adjoint representation or to the 7-dimensional representation. 
In particular, if a weight occurs with multiplicity one in $\Lambda^2 \KK^7$, then 
its corresponding weight vector must lie entirely in one of the two summands. 
Indeed, if the same weight appeared in both summands, its multiplicity in $\Lambda^2 K^7$ 
would be at least two.

This observation allows us to identify explicit vectors belonging to either $\lieg_2$ or $\KK^7$.
Finally, since both are irreducible $\lieg_2$-modules, the action of $\lieg_2$ on any 
nonzero vector in the corresponding summand generates the entire subrepresentation. 
In this way, the full decomposition of $\Lambda^2 K^7$ into the two irreducible 
components is recovered.

The Lie algebra action on $\KK^7$ also induces actions
on $S^d\KK^7$, the polynomials, homogeneous of degree $d$, in the basis elements 
and on $\Lambda^k\KK^7$, the alternating forms, by derivations, so by 
the Leibniz rule. 
The following quadratic polynomial $q$ is an invariant for the action on $S^2\KK^7$, that is
$X\cdot q=0$ for all $X\in\lieg_2$:
\begin{equation}
q\,:=\,4(v^1w^1\,+\,v^3w^3\,+\,v^4w^4)\,-\,u^2~.    
\end{equation}
It suffices to check this for the four generators, for example:
\begin{equation}
X_1\cdot q\,=\,4((0+v^1u)\,+\,(v^4w^3+0)\,+\,(0-v^4w^3))\,-\,4uv^1\,=\,0~.    
\end{equation}
This implies that the representations of $\lieg_2$ on $\KK^7$ and its dual vector 
space $(\KK^*)^7$ are equivalent.

\vspace{3mm}

\begin{figure}
    \centering
  \begin{tikzpicture}[scale=1.2]

\foreach \ang/\lab in {
90/{\alpha_2},
30/{3\alpha_1+2\alpha_2},
330/{3\alpha_1+\alpha_2},
270/{-\alpha_2},
210/{-(3\alpha_1+2\alpha_2)},
150/{-(3\alpha_1+\alpha_2)}
}{
  \draw[->,blue!80!black,thick] (0,0)--(\ang:3cm);
  \node[scale=0.65] at (\ang:3.45cm) {$\lab$};
}

\foreach \ang/\lab in {
60/{\alpha_1},
120/{\alpha_1+\alpha_2},
180/{2\alpha_1+\alpha_2},
240/{-\alpha_1},
300/{-(\alpha_1+\alpha_2)},
360/{-(2\alpha_1+\alpha_2)}
}{
  \draw[->,blue!80!black,thick] (0,0)--(\ang:2cm);
  \node[scale=0.65] at (\ang:2.35cm) {$\lab$};
}

\draw[magenta,->] (1,0)
  arc(0:150:1cm)
  node[pos=0.1,right,scale=0.5] {$5\pi/6$};


\end{tikzpicture}
    \caption{The root system of the Lie algebra $\lieg_2$.}
    \label{fig:G2}
\end{figure}

With reference to \Cref{fig:G2}, we have that the highest weight of 
the adjoint representation is denoted by $\alpha_6=3\alpha_1+2\alpha_2$
(\cite[\S 22.1]{Fulton_H}). The root $\alpha_2$ has multiplicity one 
in $\lieg_2$ and in $\Lambda^2\K^7$, the corresponding
weight spaces are spanned by $X_2$ and $v^3\wedge w^1$, hence these elements correspond, up to scalar multiple,
in $\lieg_2\subset \Lambda^2\K^7$.
Since $\ad(Y_2)(X_2)=-H_2$ we find that $Y_2\cdot (v^3\wedge w^1)$ corresponds to the diagonal matrix $-H_2$ in $\lieg_2$.
Explicitly,
\begin{equation}
    Y_2\cdot (v^3\wedge w^1)\,=\,-v^1\wedge w^1\,+\,v^3\wedge w^3~.    
\end{equation}
{Similarly, using the (long) root $\alpha_5$ we find that $X_5$ and $v^4\wedge v^1$ correspond up to scalar multiple. Since $Y_3:=[Y_1,Y_2]$, $Y_4:=[Y_1,Y_3]$ and $Y_5:=[Y_1,Y_4]$ (cf.\ \cite[\S 22.1]{Fulton_H})
one computes that on the basis of $\K^7$ the action of $Y_5$ is $v^4\mapsto 6w^1$, $v^1\mapsto -6w^4$ and the
other basis vectors map to zero. Then $\ad(Y_5)(X_5)=-H_5$, a diagonal matrix, corresponds up to scalar multiple with
\begin{equation}
Y_5\cdot (v^4\wedge v^1)\,=\,6w^1\wedge v^1-6v^4\wedge w^4~.    
\end{equation}
This suffices to show that the image of the 2-dimensional Cartan subalgebra 
$\Psi:\lieh\hookrightarrow \Lambda^2\K^7$
consists of the following alternating $7\times 7$ matrices
(we omit the coefficients that are zero, except for one on the diagonal):
\begin{equation}
\Psi(h)=M_h=\begin{pmatrix} &&&&&&a\\&&&&&b&\\&&&&c&&\\&&&0&&&\\&&-c&&&\\&-b&&&&\\-a&&&&&&\end{pmatrix}\quad \mbox{with}\quad
a+b+c\,=\,0~.    
\end{equation}


To find the image of $\K^7\subset \wedge^2\K^7$, we observe that $w^4\in \K^7$ lies in the (lowest) weight space with weight $\beta_4$ and that $Y_1w^4=0=Y_2w^4$.
The weight space $(\wedge^2\K^7)_{\beta_4}$ is two dimensional with basis $u\wedge w^4$, $w^1\wedge w^3$.
The elements mapped to zero by both $Y_1$ and $Y_2$ are the scalar multiples of $u\wedge w^4+2w^1\wedge w^3$.
So we define $\Phi:\K^7\hookrightarrow \wedge^2\K^7$ by 
$w^4\mapsto u\wedge w^4+2w^1\wedge w^3$ and by
imposing $\lieg_2$-equivariance. Then one finds:
}

The map of $\lieg_2$-representions $V\hookrightarrow\wedge^2V$ is then given by:
$$
\Phi:\left\{\begin{array}{rcl}
v^4&\longmapsto&v^4\wedge u-2v^3\wedge v^1,\\
v^3&\longmapsto&-v^3\wedge u + 2v^4\wedge w^1,\\
v^1&\longmapsto&-v^1\wedge u - 2 v^4\wedge w^3,\\
u&\longmapsto&-2(v^1\wedge w^1 +v^3\wedge w^3 -v^4\wedge w^4),\\
w^1&\longmapsto&-u\wedge w^1 + 2v^3\wedge w^4,\\
w^3&\longmapsto&-u\wedge w^3 - 2v^1\wedge w^4,\\
w^4&\longmapsto&u\wedge w^4+2w^1\wedge w^3.
       \end{array}
\right.
$$
The alternating $7\times 7$ matrix $M_x$ is obtained after the application of the $\lieg_2$-equivariant map $\Phi:\K^7\hookrightarrow \Lambda^2\K^7$ to $x=x_1v^4+x_2v^3+x_3v^1+x_4u+x_5w^1+x_6w^3+x_7w^4\in \K^7$:
$$
\Phi(x)=M_x\,=\,\begin{pmatrix}
&&&x_1&2x_2&-2x_3&2x_4\\
&&-2x_1&-x_2&&-2x_4&2x_5\\
&2x_1&&-x_3&-2x_4&&-2x_6\\
-x_1&x_2&x_3&&-x_5&-x_6&x_7\\
-2x_2&&2x_4&x_5&&2x_7&\\
2x_3&2x_4&&x_6&-2x_7&&\\
-2x_4&-2x_5&2x_6&-x_7&&&\\
        \end{pmatrix}
$$
So, given a general element in $\wedge^2\K^7$, using the action of $G_2$ it can brought in form of an alternating matrix in $\Psi(\lieh)$ (depending on two parameters) summed with an $M_x$, where we may moreover assume that $x_1=x_2=1$.

We finally consider 
\begin{equation}
    A_{ij}=(M_h)_{ij}+(M_x)_{ij}, \qquad i,j=1,2,\dots 6
\end{equation}
where $x_1=x_2=1$:
\begin{equation}
  A=  \begin{pmatrix}
        0&0&0&1&2&-2x_3\\
        0&0&-2&-1&0&-2x_4+b\\
        0&2&0&-x_3&-a-b-2x_4&0\\
        -1&1&x_3&0&-x_5&-x_6\\
        -2&0&a+b+2x_4&x_5&0&2x_7\\
        2x_3&-b+2x_4&0&x_6&-2x_7&0
    \end{pmatrix}
\end{equation}
where $a,b,x_3,x_4,x_5,x_6$ and $x_7$ are arbitrary parameters (here $c=-a-b$ in $M_h$).
Now, $B_i=((M_\lieh)_{i7}+(M_x)_{i,7})$, so that 
\begin{equation}
   B=\begin{pmatrix}
        a+2x_4,& 2x_5,&-2x_6,&x_7,&0,&0
    \end{pmatrix}
\end{equation}
We stress that the pair obtained  $(A,B)$ is in the new coordinates according to \cite{Fulton_H}. We now need to invert the isomorphism between the basis to finally re-construct the representatives of this class in the original coordinates $u^1,\dots, u^7$. In the original ones we have
\begin{equation}
    \tilde{A}=\begin{pmatrix}
0 & 2^{\frac{1}{3}} & s_{6} 2^{\frac{1}{3}} & -\frac{\left(a +b +2 s_{4}\right) 2^{\frac{1}{3}}}{2} & 0 & 0 & -s_{3} 2^{\frac{1}{3}} 
\\
 -2^{\frac{1}{3}} & 0 & -s_{5} 2^{\frac{1}{3}} & 0 & \frac{\left(-2 s_{4}+b \right) 2^{\frac{1}{3}}}{2} & 0 & -2^{\frac{1}{3}} 
\\
 -s_{6} 2^{\frac{1}{3}} & s_{5} 2^{\frac{1}{3}} & 0 & 0 & 0 & \frac{\left(a +2 s_{4}\right) 2^{\frac{1}{3}}}{2} & s_{7} 2^{\frac{1}{3}} 
\\
 \frac{\left(a +b +2 s_{4}\right) 2^{\frac{1}{3}}}{2} & 0 & 0 & 0 & s_{7} 2^{\frac{1}{3}} & -2^{\frac{1}{3}} & s_{5} 2^{\frac{1}{3}} 
\\
 0 & -\frac{\left(-2 s_{4}+b \right) 2^{\frac{1}{3}}}{2} & 0 & -s_{7} 2^{\frac{1}{3}} & 0 & s_{3} 2^{\frac{1}{3}} & s_{6} 2^{\frac{1}{3}} 
\\
 0 & 0 & -\frac{\left(a +2 s_{4}\right) 2^{\frac{1}{3}}}{2} & 2^{\frac{1}{3}} & -s_{3} 2^{\frac{1}{3}} & 0 & 2^{\frac{1}{3}} 
\\
 s_{3} 2^{\frac{1}{3}} & 2^{\frac{1}{3}} & -s_{7} 2^{\frac{1}{3}} & -s_{5} 2^{\frac{1}{3}} & -s_{6} 2^{\frac{1}{3}} & -2^{\frac{1}{3}} & 0 
\end{pmatrix}
\end{equation}
so that 
\begin{subequations}
\begin{align}\begin{split}
    A=&
    2^{\frac13}\dd u^1\wedge \dd u^2
    +s_6\,2^{\frac13}\,\dd u^1\wedge \dd u^3
    -\frac{(a+b+2s_4)\,2^{\frac13}}{2}\,\dd u^1\wedge \dd u^4
    \\
    &-s_5\,2^{\frac13}\,\dd u^2\wedge \dd u^3
    +\frac{(-2s_4+b)\,2^{\frac13}}{2}\,\dd u^2\wedge \dd u^5
+\frac{(a+2s_4)\,2^{\frac13}}{2}\,\dd u^3\wedge \dd u^6
    \\
    &+s_7\,2^{\frac13}\,\dd u^4\wedge \dd u^5
-2^{\frac13}\,\dd u^4\wedge \dd u^6
+s_3\,2^{\frac13}\,\dd u^5\wedge \dd u^6.
\end{split}
\\
    B=&
-s_3\,2^{\frac13}\,\dd u^1
-2^{\frac13}\,\dd u^2
+s_7\,2^{\frac13}\,\dd u^3
+s_5\,2^{\frac13}\,\dd u^4
+s_6\,2^{\frac13}\,\dd u^5
+2^{\frac13}\,\dd u^6.
\end{align}    
\end{subequations}
The resulting system is highly nonlinear.

\section{Conclusions}
\label{sec:concl}

The classification presented here provides a complete list of
operator–system pairs with six components for which a second-order
Hamiltonian structure can be described. The results offer a physical
characterisation of evolutionary systems of conservation laws and, in
principle, may arise in concrete phenomena and models. The extensive use of
algebraic structures and geometric methods also highlights a more explicit
connection between mathematical physics and algebro-geometric approaches,
offering a concrete framework for the analysis of Hamiltonian systems.  At
present, analogous computations for systems with a higher number of
components are not feasible. However, we anticipate that a future
classification of forms, the corresponding group actions, and their
stabilisers will be developed by experts in the field.

\subsection*{Acknowledgments}

We thank Dr Danilo Latini for the helpful discussion during the preparation of
this paper.

GG, and PV  acknowledge the support of the research project Mathematical
Methods in NonLinear Physics (MMNLP), Gruppo 4-Fisica Teorica of INFN of the Sections of Milano and Lecce respectively.  This
work has been partially supported by the National Group of Mathematical
Physics (GNFM) of the Italian Institute for High Mathematics (INdAM).


\appendix

\section{Complete list of alternating three-forms for $n=7$ and their stabilisers}
\label{app:forms}

For sake of completeness here we report the complete result
of~\cite{CohenHelminck1988}. We have that any nonzero alternating three-form on
$\KK^{7}$, with $\KK=\mathbb{R},\mathbb{C}$, is equivalent through the action
of $\GL(7,\KK)$ to one of the following nine forms:
\begin{align}
    f_{1} &= \dd u^{1} \wedge \dd u^{2}\wedge \dd u^{3},
    \label{eq:I}
    \\
    f_{2} &= \dd u^{1} \wedge \dd u^{2}\wedge \dd u^{3}
    +\dd u^{1} \wedge \dd u^{4} \wedge \dd u^{5},
    \label{eq:II}
    \\
    f_{3} &= \dd u^{1} \wedge \dd u^{2}\wedge \dd u^{3}
    +\dd u^{4} \wedge \dd u^{5} \wedge \dd u^{6},
    \label{eq:III}
    \\
    f_{4} &= \dd u^{1} \wedge \dd u^{6} \wedge \dd u^{2}
    +\dd u^{2} \wedge \dd u^{4} \wedge \dd u^{3}
    +\dd u^{1} \wedge \dd u^{3} \wedge \dd u^{5},
    \label{eq:IV}
    \\
    f_{5} &= \dd u^{1} \wedge \dd u^{2}\wedge \dd u^{3}
    +\dd u^{4} \wedge \dd u^{5} \wedge \dd u^{6}
    +\dd u^{1} \wedge \dd u^{4} \wedge \dd u^{7},
    \label{eq:V}
    \\
    f_{6} &=
    \begin{aligned}[t]
         \dd u^{1} \wedge \dd u^{5}\wedge \dd u^{2}
        &+\dd u^{1} \wedge \dd u^{7} \wedge \dd u^{4}
        \\
        &+\dd u^{1} \wedge \dd u^{6} \wedge \dd u^{3}
        +\dd u^{2} \wedge \dd u^{4} \wedge \dd u^{3},
    \end{aligned}
    \label{eq:VI}
    \\
    f_{7} &= 
    \begin{aligned}[t]
        \dd u^{1} \wedge \dd u^{4} \wedge\dd u^{6}
        &+\dd u^{1} \wedge \dd u^{5} \wedge \dd u^{7}
        \\
        &+\dd u^{2} \wedge \dd u^{4} \wedge \dd u^{5}
        +\dd u^{3} \wedge \dd u^{6} \wedge \dd u^{7},
    \end{aligned}
    \label{eq:VII}
    \\
    f_{8} &= \dd u^{1} \wedge \dd u^{2}\wedge \dd u^{3}
    +\dd u^{1} \wedge \dd u^{4} \wedge \dd u^{5}
    +\dd u^{1} \wedge \dd u^{6} \wedge \dd u^{7},
    \label{eq:VIII}
    \\
    f_{9} &= 
    \begin{aligned}[t]
    \dd u^{1} \wedge \dd u^{2} \wedge\dd u^{3}
    &+\dd u^{4} \wedge \dd u^{5} \wedge \dd u^{6}
    +\dd u^{1} \wedge \dd u^{4} \wedge \dd u^{7}
    \\
    &+\dd u^{2} \wedge \dd u^{5} \wedge \dd u^{7}
    +\dd u^{3} \wedge \dd u^{6} \wedge \dd u^{7}.
    \end{aligned}
    \label{eq:IX}
\end{align}
Moreover, each alternating three form admits a non-trivial stabiliser in
$\GL(7,\KK)$ whose structure is reported in \Cref{tab:stabs}. In
Section~\ref{sec:class} when needed we corrected some misprints in the explicit
expression of the stabilisers present in the original
text~\cite{CohenHelminck1988}. However, we observe that the global structure of
the groups is the same as found in~\cite{CohenHelminck1988}.

\begin{table}[hbt]
    \centering
    \begin{tabular}{cc}
        \toprule
        Form & Stabiliser group
        \\
        \midrule
        $f_{1}$ & $\KK^{12}\rtimes\left( \SL \right(3,\KK)\rtimes\GL(4,\KK))$
        \\
        $f_{2}$ & $\KK^{14}\rtimes\left( \GL \right(2,\KK)\rtimes(\Sp(4,\KK)\rtimes\KK^{*}))$
        \\
        $f_{3}$ & $(\KK^{6}\rtimes\left( \SL \right(3,\KK)\rtimes(\SL(3,\KK)\rtimes\KK^{*})))\rtimes\mathbb{Z}_{2}$
        \\
        $f_{4}$& $\KK^{14}\rtimes\left( \GL \right(3,\KK)\rtimes\KK^{*})$
        \\
        $f_{5}$ & $(\KK^{12}\rtimes\left( \GL \right(2,\KK)\rtimes \GL(2,\KK)))\rtimes\mathbb{Z}_{2}$
        \\
        $f_{6}$ & $\KK^{14}\rtimes\left( \GL \right(3,\KK)\rtimes\KK^{*})$
        \\
        $f_{7}$ & $\KK^{8}\rtimes\left( (\GL \right(2,\KK)\rtimes\GL(2,\KK))/\KK^{*})$
        \\
        $f_{8}$ & $\KK^{6}\rtimes(\Sp(6,\KK)\rtimes\KK^{*})$
        \\
        $f_{9}$ & $G_{2}$
        \\
        \bottomrule
    \end{tabular}
    \caption{Stabiliser groups for the three-forms $f_{1}$, \dots, $f_{9}$.}
    \label{tab:stabs}
\end{table}

We observe that $\omega_{i}$ for $i=1,2,3,4$ correspond to degenerate operators
and are not of interest. On the other hand, we have that $\omega_5$ corresponds
to $g^2$, $\omega_6$ corresponds to $g^4$, $\omega_7$ corresponds to $g^3$,
$\omega_8$ corresponds to $g^1$ and finally $\omega_{9}$ corresponds to $g^5$.
The explicit linear transformations which maps the three-forms used
in~\cite{VerVit2} to obtain the leading coefficients $g_i$ are discussed when
needed in Section~\ref{sec:class}.

\end{document}